\def\@copyrightowner{
All rights reserved. This is the authors' version of the work. It
is posted here for your personal use. Not for redistribution. The definitive Version
of Record is published in the aforementioned proceedings.
}
\newcommand{\bbN}{\mathbb{N}}
\newcommand{\valX}{\mathcal{X}}
\newcommand{\indist}{indistinguishability}
\newcommand{\crossp}{synchronous product}
\newcommand{\Crossp}{Synchronous product}
\newcommand{\calA}{\mathcal{A}}
\newcommand{\range}[1]{[ #1 ]}
\newcommand{\enquote}[1]{``#1''}
\newcommand{\defn}[1]{\emph{#1}}
\newcommand{\PPAL}{\ensuremath{\mathrm{PPAL}}}
\newcommand{\PPALstar}{\ensuremath{\mathrm{PPAL}^*}}
\newcommand{\calM}{\mathcal{M}}
\newcommand{\AP}{{AP}}
\newcommand{\obs}[1]{\stackrel{#1}{\leadsto}}
\newcommand{\Reg}[1]{\mathrm{Reg}\left( #1 \right)}
\newcommand{\ann}[2]{\left\{#1 !\right\} #2}
\newcommand{\annstar}[3]{\left\{#1 !\right\}^{#3} #2}
\newcommand{\annres}[2]{#1\{#2 !\}}
\newcommand{\may}[1]{\langle #1 \rangle}
\newcommand{\knows}[1]{[#1]}
\newcommand{\FV}[1]{{FV}(#1)}
\newcommand{\sem}[1]{\left\llbracket #1 \right\rrbracket}
\newcommand{\esem}[1]{\widetilde{\left\llbracket #1 \right\rrbracket}}
\newcommand{\val}[1]{V(#1)}
\newcommand{\bool}{\mathbb{B}}
\newcommand{\trans}[1]{T_{#1}}
\newcommand{\St}{\mathbb{S}}
\newcommand{\upw}{\uparrow\hspace{-0.2em}}
\newcommand{\Lstar}{$L^*$}
\newcommand{\Lp}{L_{\preceq}}
\author{Daniel Stan}
\affiliation{
  \institution{Technical University of Kaiserslautern}
  \state{Germany}}
\email{stan@cs.uni-kl.de}
\author{Anthony W. Lin}
\affiliation{
  \institution{Technical University of Kaiserslautern, MPI-SWS}
  \state{Germany}}
\email{lin@cs.uni-kl.de}
\title{Regular Model Checking Approach to Knowledge Reasoning over 
Parameterized Systems~(technical report)
}
\begin{abstract}
    We present a general framework for modelling and verifying epistemic
    properties over parameterized multi-agent systems that communicate by 
    truthful public announcements. In our framework, the number of agents or 
    the amount of certain resources are parameterized (i.e. not known a 
    priori), and the corresponding verification problem asks whether a given
    epistemic property is true regardless of the instantiation of the
    parameters. 
    For example, in a muddy children puzzle,
    one could ask whether each child will eventually find out whether
    (s)he is muddy, regardless of the number of children. 
    Our framework is regular model checking (RMC) -based, wherein
    synchronous finite-state
    automata (equivalently, monadic second-order logic over words) are
    used to specify the systems. We propose an extension of public announcement 
    logic as specification language. 
    Of special interests is the addition of the so-called iterated public 
    announcement operators,
    which are crucial for reasoning about knowledge in parameterized systems.
    Although the operators make the model checking problem undecidable, we 
    show that this becomes decidable when an appropriate ``disappearance
    relation'' is given. Further, we show how Angluin's 
    L*-algorithm for learning finite automata 
    can be applied to find a disappearance relation, which is guaranteed
    to terminate if it is regular.
    We have implemented the algorithm and apply this to such examples 
    as the Muddy Children Puzzle, the Russian Card 
    Problem, and Large Number Challenge.
\end{abstract}
\keywords{Epistemic; Public Announcement Logic; Regular Model Checking; Automaton Learning; Parameterized; Muddy Children}  
\begin{document}


\pagestyle{fancy}
\fancyhead{}


\maketitle 

\section{Introduction}
Consider the standard problem of muddy children puzzle in knowledge reasoning
\cite{halpern-book}.
Suppose that there are a total of $N$ children, where $M \in \{1,\ldots,N\}$ of 
them has a mud on their forehead. Each child can observe whether another child
(but not himself) has a mud on their forehead. The muddy children protocol goes
in rounds. At each round, the father declares that there is a muddy child
(i.e. with a mud on their forehead), and asks the children whether they know if 
they are muddy, to which the children can answer yes/no. The announcements made
by the children are observable by other children. After a few rounds (more
precisely $M$ rounds), all children will discover the so-called common knowledge
of which children (including themselves) are muddy and which are not, regardless
of the value of the parameters $M$ and $N$ (e.g. see \cite{halpern-book}). 

The muddy children puzzle can be constructed as a typical
example of a \defn{parameterized verification
problem}~\cite{sasha-book,AMRZ16,KL13,KL16} but with respect to epistemic properties. 
Even though the problem was shown undecidable for a simple safety property
by Apt and Kozen in the 80s~\cite{AK86}, the past twenty years or so have 
witnessed significant progress in the field of parameterized verification (e.g., 
see~\cite{abdulla-survey12,vojnar-habilitation,sasha-book,Zuck-survey,LR21} for
excellent surveys). Researchers resort to either (1) general semi-algorithmic 
techniques that are applicable to general systems, but either without a 
termination guarantee or the method might terminate with a ``don't know''
answer, or (2) restriction to decidable subproblems (e.g. obtained by imposing
certain structures on the parameterized systems). 
More recently, parameterized verification problem was
also considered in the setting of multi-agent systems (e.g., see
\cite{AMRZ16,sasha-book,LP20,KL13,KL16}). Despite this, very little work has been 
done
on parameterized verification problem with respect to epistemic properties,
which is in particular which is applicable in the simple setting of the muddy children
example. This is an extremely challenging problem, while most of the research
focus in parameterized system verification for a few decades has been on simple
safety
properties  (e.g. \cite{vojnar-habilitation,abdulla-survey12,CHLR17,AHH16,NJ13})
and only recently on liveness properties (e.g. \cite{fairy-tale,LR16}).

\paragraph{Summary of Results.} 
We propose a framework for modelling and model-checking epistemic
properties over parameterized multi-agent systems.
Our emphasis in this paper is
on \emph{general semi-algorithmic solutions} that can lend themselves to automatically
solve a variety of interesting examples in knowledge reasoning.
While our semi-algorithm is not guaranteed to terminate in general, we provide
\emph{a general termination condition}, which is proved to subsume examples like Muddy
Children Puzzle, Large Number Challenge, and Russian Card Problem. We detail
our results below.

Firstly, let us recall a standard setting in the finite 
non-\-para\-met\-erized case using \emph{Public Announcement Logic (PAL)}
\cite{DEL-book,Plaza07} (also see \cite{DR07,DHMR06,DRV05}, which provide 
more detailed modelling and a finite-state model checker). The system is 
represented by a finite Kripke
structure, each of whose (binary) accessibility relation $\obs{a}$ (for each
agent $a$) satisfying the 
S5 axioms,
i.e., $\obs{a}$ is an equivalence relation (reflexive, symmetric, and transitive).
That way, $\obs{a}$ can be interpreted as knowledge-indistinguishability by agent
$a$. PAL then is simply a standard modal logic with one accessibility relation
per agent, as well as public announcement modalities $\ann{\varphi}{}$, whereby 
\emph{each} agent learns about $\varphi$. A standard application of the 
public announcement operator is to model the announcement of a child in 
the muddy children protocol, who declares that he knows whether he has a mud
on his forehead. 

To extend the framework to the parameterized setting, there are a few problems.
Firstly, since the Kripke Structure is now infinite (i.e. the union of all
possible instantiations of the parameter), how do we \emph{symbolically
represent} the Kripke Structure? Secondly, a closer look at the solution to the 
muddy children example via PAL (or similar logics)
\cite{DEL-book,Plaza07,halpern-book} suggests that the formula in the logic
is \emph{different} for different numbers of muddy children. For parameterized
verification, it is essential that we have a \emph{uniform} specification for
the epistemic property regardless of the instantiation of the parameters.
We note that generalizations of epistemic logics that can provide such a 
uniform specification do exist (e.g., quantified epistemic logic \cite{BL09},
iterated public announcement~\cite{MM05,GS11}). However,
the resulting logics are not only undecidable, but there are also no known 
semi-algorithmic solutions that would work for interesting examples.

Our framework (see \S\ref{sec:pmodelling}) is in the spirit of \emph{regular model 
checking} 
\cite{abdulla-survey12,rmc-survey,Blum99,BG04,TL10}, wherein a configuration in the
(parameterized) systems are represented by a string over some finite alphabet
$\Sigma$, while a binary relation $\obs{} \subseteq
\Sigma^* \times \Sigma^*$ is represented by an automata over the product
alphabet $\Sigma \times \Sigma$. [The reader could understand a product 
alphabet just like a normal alphabet, where an automaton would synchronously
read a pair $(a,b)$ of symbols at each step.]
The resulting Kripke structures are called
\emph{automatic Kripke structures} \cite{TL10,Blum99,BG04}. 
One benefit of this framework is that one could encode an infinite number
of accessibility relation $\{\obs{i}\}_{i \in \mathbb{N}}$ (one for each
        agent indexed $i=0,1,2,\ldots$), where $\obs{i} \subseteq \Sigma^*
\times \Sigma^*$, as \emph{one single
automaton} representing $\obs{} \subseteq \Sigma^* \times 
\mathbb{N} \times \Sigma^*$. Since a string encoding $s(i)$ of each number $i 
\in \mathbb{N}$ could be given (e.g. $i = 3$ could be represented in unary), 
this automata could run over some product alphabet, e.g., $\Sigma \times 
        \{0,1\} \times \Sigma$.
Second, to reason about knowledge over automatic Kripke Structures, it is
        important to enrich PAL with a few new features: (1) basic string
        reasoning (e.g. whether $b$ occurs at an even position in the string),
        since configurations in the Kripke models are represented as strings
        (2) iterated public announcement operator $\annstar{\varphi}{}{*}$, since 
        in general an unbounded number of public announcements need to be
        made in parameterized systems (e.g. one announcement per child/round
        in the muddy children protocol).

Our key results is as follows. First, in the absence of the iterated public
announcement operators in the input formula, the model checking
        problem in our framework is \emph{decidable with a nonelementary 
        complexity} (see \S\ref{sec:rmc}). Despite
the high complexity, we show that our implementation~\cite{prototype} works well on examples 
like the parameterized version of the Russian Card Problem \cite{DHMR06,vD03}
        (where the total number of cards is not fixed a priori), 
where the tool verifies anonymous communication between two parties of the 
        system could be achieved (see \S\ref{sec:exp}).
        Second, in the presence of the iterated public
announcement operators in the input formula, although the model
        checking problem is in general undecidable (see \S\ref{sec:rmc}), we provide a semi-algorithm
for the problem tapping into Angluin's L* automata learning algorithm
        \cite{angluin,learning-book} (see \S\ref{sec:disrel}). 
To the best of our knowledge, this is the first application of automata 
learning methods to the parameterized model checking of epistemic properties.
        Loosely speaking,
        the learning algorithm will
attempt the computation of the so-called ``disappearance
relation'', that captures the order in which states are discarded during
the announcements and is likely to exhibit regular patterns of the system.
A termination guarantee is provided in this case
(i.e. when the order can be represented by regular
        languages). We implemented the method and show that
        it can successfully verify the parameterized versions of the Muddy 
        Children Protocol and the Large Number Challenge (see \S\ref{sec:exp}).

\section{Preliminaries}
\label{sec:prelim}

We denote $\bbN$ the set of natural numbers, and for $n\in\bbN$,
$\range{n} = \{x\in\bbN~|~ 0 \leq x < n \}$. 

\noindent
\textbf{Automata Background:}
  An \emph{alphabet} is a finite set $\Sigma$. A \emph{word} $w$ over $\Sigma$ is a
  finite sequence $x_0 \hdots x_{n-1} \in \Sigma^n$,
  of \emph{letters} of $\Sigma$, for some length $n$, which is
  is denoted $|w| = n$.
  We write $w[i] = x_i$ for its $i$-th letter ($i\in \range{|w|}$)
  and $\epsilon$ for the empty word of length~$0$.

    A set of words $L$ is called a \emph{language}. It is \emph{regular} if it
  it can be recognized by a regular expression, or equivalently by a
    non-deterministic automaton (e.g. see \cite{Sipser-book}).
  We denote $\Reg{\Sigma}$ the class of
  regular languages over $\Sigma$ and recall the class is closed under concatenation,
  boolean operations, and Kleene star.

  Let $\Sigma_1\subseteq \Sigma'_1$ and $\Sigma_2$.
Regular languages are also preserved by
\emph{\Crossp{}} and \emph{morphism}:

  For $w_1\in \Sigma_1^l$ and $w_2\in\Sigma_2^l$ two words of the same length
  $l\in\bbN$, we write
  $w_1 \otimes w_2$ for the \emph{\crossp{}} word
  $w\in (\Sigma_1 \times \Sigma_2)^l$ such
  that $\forall i\in\range{l},~ w[i] = (w_1[i],w_2[i])$.
  We extend $\otimes$ to languages, by defining, for
  $L_1 \subseteq \Sigma_1^*$
  and
  $L_2 \subseteq \Sigma_2^*$,
  $L_1 \otimes L_2 = \left\{
     w_1 \otimes w_2 ~\middle|~ w_1 \in L_1 \wedge w_2 \in L_2 \cap \Sigma_2^{|w_1|}
  \right\}$

  A \emph{morphism} is any function
  $f: \Sigma_1 \rightarrow \Sigma_2$, we extend $f$ to words
  over $\Sigma_1$ by defining, for any $w\in\Sigma_1^*$,
  $f(w) = f(w[0])\hdots f(w[|w|-1])\in \Sigma_2^*$, then to languages
  over the superset $\Sigma'_1$: for any $L\subseteq (\Sigma'_1)^*$,
  $f(L) = \{f(w)~|~w\in L\cap (\Sigma_1)^*\}$.

Of particular interest, we define \emph{projection morphisms}:\\
given $\Sigma_1\hdots \Sigma_n$, and 
$1\leq i_1 < \hdots i_k \leq n$, we define:
\[\forall (\alpha_i)_{1\leq i\leq n},
    \pi_{(\Sigma_{i_1},-,\hdots,-,\Sigma_{i_k})}(\alpha_1\hdots \alpha_n) =
    (\alpha_{i_1},\hdots \alpha_{i_k})
\]
For example, \crossp{}'s counterparts can be defined as the morphisms
$\pi_{(\Sigma_1,-)}$ and $\pi_{(-,\Sigma_2)}$, \emph{projections}
on the first and second component, respectively.

We encode positions inside a word with the alphabet $\bool=\{0,1\}$
and for $0\leq i < l$, $V(i,l) = 0^{i}10^{l-i-1}\in\bool^l$ encodes the $i$-th position.

When the meaning is clear,
we will at times identify a finite 
automaton
$\mathcal{A}$~and its recognized language \mbox{$\mathcal{L}(\mathcal{A})\in \Reg{\Sigma}$}.
In particular, whenever we claim a language $L$ is regular, a recognizing
automaton may be provided instead.
Whenever $\Sigma=\Sigma_1\times \Sigma_2$, the automaton may also be called
a \emph{length-preserving} transducer, or simply \enquote{transducer},
as it can be interpreted as an automaton
mapping a word $w_1 \in \Sigma_1^*$ to (non-deterministically) a word
$w_2 \in \Sigma_2^*$ of the same length, such that $w_1 \otimes w_2 \in L$.

\section{Our Framework}
\label{sec:pmodelling}

In this section, we provide our regular model checking framework to knowledge
reasoning over parameterized systems. The section has two parts. First, an 
extension of 
PAL called \PPAL{} (Parameterized PAL) that is interpreted over a parameterized
Kripke structure. Second, a regular presentation of parameterized Kripke
structure, over which \PPAL-model checking is decidable.

\subsection{Parameterized Public Announcement Logic}
The logic \PPAL{} will be evaluated on a parameterized Kripke structure. Loosely,
such a structure represents a parameterized system, which can be viewed as a
union of an 
infinite family of structures, each obtained by instantiating the parameter. 
Each state will be assigned a fixed parameter instantation, shared by all its
successors.
For simplicity, we use only one parameter called the \emph{state size}, which 
quantifies the (maximal) number of agents involved, as well as the number of 
copies of atomic propositions.

\begin{definition}
  \label{def:pkrike}
    A \emph{parameterized Kripke structure} is a tuple\\
  \mbox{$\calM = (S,\AP, \obs{}, L, |\cdot|)$}
  where:
  \begin{itemize}
    \item $S$ is a (possibly infinite) set of states;
    \item $\AP$ is a finite set of atomic propositions;
    \item $|\cdot|$ maps any state $s\in S$ to its size $|s|\in\bbN$;
    \item $L$ maps any state $s\in S$ and index
        $i\in \range{|s|}$
        to its \emph{labelling} $L_i(s)\subseteq \AP$;
    \item $\obs{}\subseteq S\times \bbN \times S$ is a $\bbN$-labelled
        accessibility relation between states, called \emph{\indist{} relation},
        such that any triple $(s,i,s')\in \obs{}$ satisfies
        $0 \leq i < |s|=|s'|$. We assume: for any $s\in S$
        and $0\leq i < |s|$, we have $(s,i,s) \in \obs{}$.
  \end{itemize}
\end{definition}

$(s,i,s')\in \obs{}$ is written $s\obs{i} s'$ and reads
"if $s$ is the actual state of the system (world), the $i$-th agent
entertains the possibility that the current state is actually $s'$,
given its observation."
Even though this is not enforced by our definition, most of the proposed
models below will assume $\obs{i}$ to be an equivalence relation, for all $i$,
and this property will be preserved when deriving models.

\begin{example}
  \label{ex:muddy}
  \Cref{fig:muddycube} depicts a parameterized Kripke structure for the
  muddy children puzzle, where $S=\{m,c\}^*$, $\AP=\{m\}$, and
  the size $|w|$ of a state $w\in S$ is defined as its length.
  For all $i\in \range{|w|}$, $L_i(w) = \left\{
    \begin{aligned}
        \{m\}&\text{~if $w[i] = m$}\\
        \emptyset&\text{~otherwise}
    \end{aligned}
  \right.\quad \in 2^{\AP}$
  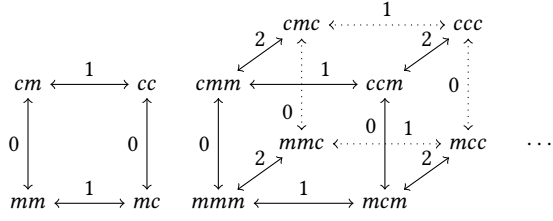
\begin{figure}

     \begin{tikzpicture}
       \tikzset{rn/.style={node distance=5em}}
       \tikzset{rnw/.style={rn, node distance=5em}}
     
       \node[rn]                (mm) {$mm$};
       \node[rnw, right of=mm] (mc) {$mc$};
       \node[rn , above of=mm] (cm) {$cm$};
       \node[rn , above of=mc] (cc) {$cc$};
       \draw[<->] (mm) edge node[above] {$1$} (mc)
                       edge node[left]  {$0$} (cm)
                  (mc) 
                       edge node[left]  {$0$} (cc)
                  (cm) edge node[above] {$1$} (cc);
     
       \tikzset{rnw/.style={rn, node distance=7em}}
     
       \node[rn, right of=mc, node distance=3em] (mmm) {$mmm$};
       \node[rnw, right of=mmm] (mcm) {$mcm$};
       \node[rn , above of=mmm] (cmm) {$cmm$};
       \node[rn , above of=mcm] (ccm) {$ccm$};
       
       \node[rn] at ($(mmm)!0.5!(ccm)$) (mmc) {$mmc$};
       \node[rnw, right of=mmc]         (mcc) {$mcc$};
       \node[rn , above of=mmc]         (cmc) {$cmc$};
       \node[rn , above of=mcc]         (ccc) {$ccc$};
     
       \draw[<->] (mmm) edge node[above] {$1$} (mcm)
                        edge node[left]  {$0$} (cmm)
                        edge node[above] {$2$} (mmc)
                  (mcm) 
                        edge node[left,pos=0.7]  {$0$} (ccm)
                        edge node[above] {$2$} (mcc)
                  (cmm) edge node[above,pos=0.7] {$1$} (ccm)
                        edge node[above] {$2$} (cmc)
                  (ccm) 
                        edge node[above] {$2$} (ccc);
       \draw[<->,dotted]
                  (mmc) edge node[left,pos=0.2]  {$0$} (cmc)
                        edge node[above,pos=0.7] {$1$} (mcc)
                  (ccc) edge node[left]  {$0$} (mcc)
                        edge node[above] {$1$} (cmc)
                        ;
       \node[right of=mcc, node distance=3em] {$\hdots$};
     \end{tikzpicture}
    \caption{First members of the parameterized Kripke family of the Muddy
    children example, with parameter~$2$~(left) and~$3$~(right),
    self loops are omitted.}
    \label{fig:muddycube}
    \Description{With 2 agents, the Kripke structure has a square shape, when
    two nodes are attached if they differ by only one letter. The structure
    became a cube in dimension/parameter 3.}
  \end{figure}
\end{example}

\begin{definition}
  \label{def:ppal}
    We define a formula $\varphi$ in \emph{parameterized public announcement 
    logic} (\PPAL)
  by the following grammar:
  \[
    \varphi ::=
        \top~|~
        \varphi\wedge \varphi~|~
        \neg \varphi~|~
        \exists i: \varphi~|~
        i = 0~|~
        i\%k = 0~|~
        i = j+k~|~
        p_i~|~
        \may{i}\varphi~|~
        \ann{\varphi}{\varphi}
 \]
 Where
    $i,j$ are index variables,
    $k\in\mathbb{N}$ is any integral constant and
    $p\in{\AP}$ is any atomic proposition.
\end{definition}
Intuitively, \PPAL{} extends PAL by an indexing capability, so
that one could easily refer to the $i$th agent in the system. 
This is to some extent akin to how indexed LTL extends LTL \cite{sasha-book}. 
However, we also suitably restrict the indexing capability (essentially,
the difference between the indices of two agents is a certain constant $k$, or
that the index of agent is $k \pmod{d}$ for some constants $k$ and $d$). This
is essentially the extension of the difference logic \cite{KS08} with modulo 
operators. This restriction makes the logic amenable to regular model checking 
techniques, but is also sufficiently powerful for modelling typical examples in
parameterized systems.

\textbf{Shorthands:} 
Boolean connectives
$\vee,\rightarrow,\leftrightarrow$ and universal quantification
$\forall$ can be encoded in a standard way.
The formula $\knows{i}\varphi \equiv \neg \may{i} \neg \varphi$
encodes that agent~$i$ knows with certainty that $\varphi$ holds.
Usage of constants is also allowed:
$i = k\equiv \exists j: j=0\wedge i=j+k$,
$p_k \equiv \exists i: i=k \wedge p_i$,
$\may{k} \varphi \equiv \exists i: i=k \wedge \may{i}\varphi$.

We denote $\FV{\varphi}$ for the set of
(\enquote{not quantified}) \emph{free variables}, of $\varphi$.
We say that $\varphi$ is a closed formula whenever $\FV{\varphi} = \emptyset$.
For any set $X$ of index variables, a function $\mu\in \bbN^{X}$ is called a
valuation. For a valuation $\mu$ and a formula $\varphi$, we write
$\varphi(\mu)$ for the instantiated formula where each occurrence of $x\in X$
has been replaced by $\mu(x)$. In particular, if $\FV{\varphi}\subseteq X$, then
$\varphi(\mu)$ is a closed formula.

\begin{definition}
  \label{def:semantics}
  For a parameterized Kripke structure $\calM$,
  a state $s\in S$,
  a \PPAL{} formula $\varphi$,
  and a valuation $\mu\in\bbN^{\FV{\varphi}}$, we
  define the \emph{satisfaction relation} $\vDash$, inductively, by
  $\calM, s, \mu \vDash \varphi$ if, and only if,
  $\forall i, \mu(i) \in \range{|s|}$ and one of the following condition
  holds:
  \[
  \begin{aligned}
    \varphi &\equiv \top \\
    \varphi &\equiv \psi_1 \wedge \psi
         \text{~and~}
        \calM, s, \mu \vDash \psi_1
        \text{~and~}
        \calM, s, \mu \vDash \psi_2\\
    \varphi &\equiv \neg \psi
         \text{~and~}
        \calM, s, \mu \not\vDash \psi\\
    \varphi &\equiv\exists i: \psi
        \text{~and~}
        \calM, s, \mu' \vDash \psi
        \text{~for some $\mu'$ s.t.~}
        \forall x \neq i, \mu(x) \equiv \mu'(x)\\
    \varphi &\equiv i = j+k
         \text{~and~} \mu(i) \equiv \mu(j) + k\\
    \varphi &\equiv i = 0
         \text{~and~} \mu(i) \equiv 0\\
    \varphi &\equiv i\%k = 0
         \text{~and~} \mu(i)\%k \equiv 0\\
    \varphi &\equiv p_i
         \text{~and~}
        p\in L_{\mu(i)}(s)\\
    \varphi &\equiv \may{i}\psi
         \text{~and
        there exists $t\in S$ such that~}
        s\obs{\mu({i})} t \text{~and~} \calM, t, \mu \vDash \psi\\
    \varphi &\equiv \ann{\psi_1}{\psi_2}
         \text{~and~}
        \calM, s, \mu \vDash \psi_1 \text{~implies~}
        \annres{\calM}{\varphi(\mu)}_\mu, s, \mu \vDash \psi
  \end{aligned}
  \]
  where for any closed \PPAL{} formula $\psi$,
  $\annres{\calM}{\psi}$ is the (parameterized) Kripke structure
  $\calM$ restricted to the state space satisfying $\psi$:
  $\annres{S}{\psi} = \{s~|~\calM,s,\cdot \vDash \psi\}$.
\end{definition}

Note that we adopt here the \emph{vacuous truth} semantics for the
public announcement operator: whenever a state doesn't satisfy a publicly
announced property, it satisfies its conclusion.
This choice will turn out to be more convenient
with our examples involving the newly iterated public announcement.
While an alternative definition $\varphi \wedge \ann{\varphi}{\psi}$ is
possible, they are both expressively equivalent.

It is important to notice that the logic does not make a distinction between
variables designed for atomic propositions manipulation  and variables for
indexing agents. Not only this simplification makes our definition more concise,
it also enables the specification of relationships between agents and \emph{their}
atomic propositions.

\begin{example}
  \label{ex:muddyunique}
  Consider the scenario of the muddy children puzzle, where
  the father announces that there is \emph{exactly} one muddy child.
  \enquote{after this announcement, every child knows their own
  state} is encoded as the formula:
  \[
    \ann{
        \exists i: m_i \wedge
        \forall j, i\neq j \rightarrow \neg m_j
    }{
        \forall i, \knows{i}{m_i} \vee \knows{i}{\neg m_i}
    }
  \]
\end{example}

\subsection{Regular Kripke Structures}

We now provide a regular presentation of parameterized Kripke structures,
and define the model checking problem.
\begin{definition}
  \label{def:regkripke}
  Let
  $\calM=(S,\AP, \obs{}, L, |\cdot|)$
  be a parameterized Kripke structure. It
  is \emph{regular} if there exists an alphabet $\Sigma$
  such that:
  \begin{itemize}
    \item $S\subseteq \Sigma^*$;
    \item For all $s\in S$, $|s|$ is the actual length of $s$, seen as a word;
    \item For all $i \in \range{|s|}$, $L_i(s) = L_0(s[i])$;
    \item The \indist{} relation can be encoded as a transducer, more precisely
        the following language is regular:
        \[
        \trans{\calM} = \left\{
            s \otimes \val{i,|s|} \otimes t~\middle|~
            s \obs{i} t
        \right\}
        \]
  \end{itemize}
\end{definition}
Recall that we assume the reflexivity $\obs{i}$, for each $i \in \mathbb{N}$.
Hence, the state space $S$ of a regular Kripke structure
is also regular, since $S = \pi_{\Sigma,-}(\trans{\calM})$ is a morphism
image.
In the rest of the paper, we will assume the labelling $L$ to be fixed, and
identify any regular Kripke structure $\calM$ with its regular language
$\trans{\calM}$, seen as a transducer. The following proposition justifies
the validity of the above restriction.
\begin{proposition}
    Given an \indist{} relation $(\obs{i})_i$, encoded as a transducer,
    checking any of the following properties to be satisfied by $\obs{i}$ (for each
    $i \in \mathbb{N}$) is decidable:
    (1)~reflexive, (2)~symmetric, and (3)~transitive.
\end{proposition}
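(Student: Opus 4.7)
The plan is to reduce each of the three properties to a decidable regular-language inclusion over the product alphabet $\Sigma\times\bool\times\Sigma$. The key fact I will exploit is that, because $\trans{\calM}$ is regular and length-preserving, all the constructions I need (synchronous product, projection morphism, inverse morphism, Boolean combinations) preserve regularity, and inclusion of regular languages is decidable.

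For \emph{reflexivity}, the target language
\[
R_{\mathrm{refl}} = \left\{\, s\otimes \val{i,|s|}\otimes s ~\middle|~ s\in\Sigma^*,\ i\in\range{|s|}\,\right\}
\]
is regular: a finite automaton reads the three tracks synchronously, checks letter-by-letter that the two $\Sigma$-tracks carry the same symbol, and enforces that the $\bool$-track contains exactly one $1$. Reflexivity is then the inclusion $R_{\mathrm{refl}}\subseteq \trans{\calM}$. For \emph{symmetry}, let $\sigma\colon \Sigma\times\bool\times\Sigma \to \Sigma\times\bool\times\Sigma$ be the letter-wise morphism swapping the first and third tracks, $\sigma(a,b,c)=(c,b,a)$. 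Then $\sigma(\trans{\calM})$ recognises exactly the inverse relation, so symmetry becomes $\trans{\calM}\subseteq \sigma(\trans{\calM})$.

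For \emph{transitivity} I build a regular transducer for the relational composition $\obs{}\circ\obs{}$ at a common index~$i$. Take the product of two copies of an automaton for $\trans{\calM}$, operating on words over the $5$-track alphabet $\Sigma\times\bool\times\Sigma\times\bool\times\Sigma$: the first copy reads tracks $(1,2,3)$ while the second reads tracks $(3,4,5)$, and I intersect with the regular constraint forcing the two $\bool$-tracks to agree position-wise (same index). Applying the projection morphism $\pi_{(\Sigma,\bool,-,-,\Sigma)}$ onto tracks $(1,2,5)$ yields a regular language $R_\circ$ encoding precisely those triples $s\otimes\val{i,n}\otimes u$ for which there exists some $t$ with $s\obs{i} t\obs{i} u$. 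Transitivity is then the inclusion $R_\circ \subseteq \trans{\calM}$, again decidable.

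The delicate step is the transitivity one: we must existentially quantify over the intermediate word $t$ while keeping both the length and the index~$i$ synchronised across the two composition steps. In general, transducer composition need not preserve regularity, but in the length-preserving setting the product-then-project construction above is well-behaved, so the existential quantification over $t$ is absorbed into the projection and no genuine obstacle remains.
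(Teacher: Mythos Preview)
Your argument is correct, but it takes a more explicit route than the paper. The paper's proof is a one-line appeal to a general fact: reflexivity, symmetry, and transitivity are each first-order definable properties of a binary relation, and first-order model checking over automatic structures is known to be decidable. Your constructions are precisely what that general decidability result unfolds to in these three special cases: closure of regular languages under synchronous product, Boolean operations, and projection morphisms is exactly closure under the first-order connectives and existential quantification. So the two proofs share the same mathematical core; you have simply written out by hand what the paper obtains by citation. The advantage of your version is that it is self-contained and makes the automaton-level complexity of each check visible; the advantage of the paper's version is brevity and immediate generalisation to any first-order property of $\obs{}$ (for instance, all of S5 at once).

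One small correction: in your reflexivity clause you let $s$ range over all of $\Sigma^*$, but reflexivity of $\obs{i}$ is only required on the state space $S\subseteq\Sigma^*$. If $S$ is a proper subset of $\Sigma^*$, your inclusion $R_{\mathrm{refl}}\subseteq\trans{\calM}$ would fail even for a reflexive relation. The fix is immediate: replace $\Sigma^*$ by $S$, which is regular since $S=\pi_{(\Sigma,-,-)}(\trans{\calM})$.
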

This follows from the fact that reflexivity, symmetry, and transitivity of
a binary relation are first-order decidable, and that first-order model
checking over regular Kripke structures (more generally 
\emph{automatic structures}) is decidable \cite{Blum99,BG04}. As a remark,
it follows also that checking whether a regular Kripke Structure satisfies the S5 axioms
(whether all $\obs{i}$ are equivalence relations) is decidable.

\begin{example}[Muddy children]
  \label{ex:muddyreg}
  The parameterized Kripke structure of \cref{ex:muddy} is regular:
  the transducer $\trans{\calM}$
  is recognized by the NFA depicted in \cref{fig:muddydfa}.
  For example, the accepting run
  $
  q_0 \xrightarrow{(c,0,c)} q_0 \xrightarrow{(c,0,c)} q_0 \xrightarrow{(m,1,c)} q_f
  $
  for the word $ccm\otimes 001 \otimes ccc$
  encodes the observation $ccm\obs{2}ccc$.

  \begin{figure}
  \begin{tikzpicture}
    \tikzset{rn/.style={draw, ellipse}}
    \node[rn] (q0) {$q_0$};
    \node[rn,node distance=15em, right of=q0, accepting] (q1) {$q_1$};

    \draw[->] (q0) edge[loop left]
                   node[left,yshift=-0.5em] {\footnotesize $(m,0,m)$}
                   node[left,yshift= 0.5em] {\footnotesize $(c,0,c)$}
                        (q0)
                   edge[<-] +(-2em,+2em)
                   edge[in=180,out=0]
                     node[above] {$(m,1,m),(m,1,c)$}
                     node[below] {$(c,1,m),(c,1,c)$}
                        (q1);
    \draw[->] (q1) edge[loop right]
                   node[right,yshift=-0.5em] {\footnotesize $(m,0,m)$}
                   node[right,yshift= 0.5em] {\footnotesize $(c,0,c)$}
                        (q1);
  \end{tikzpicture}
  \caption{Transducer for the Muddy children}
  \label{fig:muddydfa}
  \Description{
  The transducer starts from an initial control state q0,
  and loops on it while reading triples equating source and target letter,
  the observation bit being set to 0. A transition from
  q0 to the accepting state q1 is allowed when reading any triple with
  observation bit set to 1.
  Further equating triples, with observation bit set to 0,
  are allowed as loops on q1.
  }
  \end{figure}
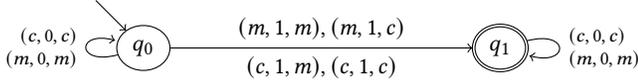
\end{example}

The \emph{regular model checking problem for \PPAL{}} is the problem of model
checking \PPAL{} formulas over regular Kripke structures:
given a regular Kripke
structure $\calM$, and a formula $\varphi$, check if 
the following satisfaction set is empty:
  \[
    \sem{\varphi}(\calM) := \left\{
        (s,\mu)\in S\times \bbN^{\FV{\varphi}}
            ~\middle|~
        \calM, s, \mu \vDash \varphi
    \right\}
  \]
Although we are considering non-pointed Kripke structures, the setting is
not restrictive here, as initial states could be specified by
adding an extra atomic proposition
${init}\in \AP$ and replacing $\varphi$ by $\varphi' \equiv {init} \rightarrow \varphi$.
\section{Regular Model Checking of PPAL}
\label{sec:rmc}

Our main result in this section is the decidability of regular model
checking of \PPAL.
\begin{theorem}
  \label{thm:rmc}
  Given
  a regular Kripke structure $\calM$ and
  a closed \PPAL{} formula $\varphi$, its semantics
  $\sem{\varphi}(\calM)$ is regular and computable.
\end{theorem}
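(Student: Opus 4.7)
My strategy is a structural induction on $\varphi$, generalised to open formulas, where the induction hypothesis produces, for each subformula $\psi$ with $\FV{\psi}=\{i_1,\dots,i_k\}$, a finite automaton recognising $\sem{\psi}(\calM)$ encoded as words over $\Sigma \times \bool^k$: a word $s$ is paired with Boolean columns $V(\mu(i_1),|s|),\dots,V(\mu(i_k),|s|)$ marking the positions assigned by $\mu$. Alphabets of subformulas are unified by padding with dummy columns whenever needed, so that the standard closure operations recalled in the Preliminaries apply directly.

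The atomic cases and Boolean connectives are straightforward: the arithmetic atoms $i=0$, $i=j+k$, and $i\%k=0$ are local or ultimately periodic conditions on the relative positions of the marker bits and are hence regular; $p_i$ combines reading the unique $1$ of column $i$ with checking $p\in L_0(\cdot)$; conjunction is intersection after padding; negation is complementation restricted to well-formed encodings (exactly one $1$ per column, of length $|s|$); and $\exists i:\psi$ is the projection morphism that forgets the $i$-th Boolean column. For $\may{i}\psi$, I take the product of the transducer $\trans{\calM}$ with the automaton for $\sem{\psi}$: on input $(s,\mu)$, the combined automaton uses the $i$-marker column as the observation bit of $\trans{\calM}$, nondeterministically reads a matching successor $t$ of the same length, and then simulates the $\psi$-automaton on $(t,\mu)$.

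The only delicate case is the announcement $\ann{\psi_1}{\psi_2}$, since the restricted Kripke structure $\annres{\calM}{\psi_1(\mu)}$ depends on $\mu$. I therefore strengthen the induction to handle \emph{parameterised} regular Kripke structures, in which $\trans{\calM}$ lives over an extended alphabet $(\Sigma\times\bool^r)\times\bool\times(\Sigma\times\bool^r)$ with $r$ synchronised parameter columns shared by source and target. In this setting, $\trans{\annres{\calM}{\psi_1}}$ is obtained by intersecting $\trans{\calM}$, after adding the $\FV{\psi_1}$-columns as new parameters, with the regular constraint that both the source and target projections belong to $\sem{\psi_1}$, which is available by the induction hypothesis. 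Evaluating $\psi_2$ inside this enriched structure and combining with the outer vacuous-implication semantics yields $\sem{\ann{\psi_1}{\psi_2}}$ by standard Boolean operations.

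The main obstacle is precisely this bookkeeping of marker columns across nested announcements: each announcement may add parameter columns, and one must verify that all previous constructions (modal, Boolean, quantifier cases) lift uniformly to transducers over the extended alphabet. Once this is granted, every step is an explicit automata-theoretic operation—product, projection, determinisation, complementation—so regularity is preserved and the resulting automaton is computable; the non-elementary blow-up flagged earlier in \S\ref{sec:rmc} originates from the interleaving of projections (for $\exists$) and complementations (for $\neg$), matching the known cost of decision procedures for automatic structures.
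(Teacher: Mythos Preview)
Your proposal is correct and follows essentially the same approach as the paper: the paper also reduces the theorem to a strengthened induction (its Lemma~\ref{lem:recursive}) over an \emph{extended} transducer $T\in\Reg{\Sigma\times\bool\times\Sigma\times\bool^{\valX}}$ that carries one Boolean column per free variable, so that each valuation $v$ selects a regular Kripke structure $\calM_v$, and the announcement case is handled exactly as you describe, by restricting both source and target of $T$ to $\esem{\varphi}(T)$ while keeping the valuation columns synchronised. The only cosmetic difference is that the paper fixes the variable set $\valX$ up front and works with a single uniform alphabet throughout, whereas you describe growing the parameter columns as nested announcements are encountered; these are equivalent presentations of the same construction.
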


When evaluating a public announcement, the Kripke structure may be modified
in a way that is dependent of the current valuation. The crux of the proof lies
in carrying a family of regular Kripke structures, encoded as a single extended
transducer. The following lemma makes our claim more precise:
\begin{lemma}
  \label{lem:recursive}
  Let $\valX$ be a finite set of variables,
  $\varphi$ a \PPAL{} formula with $\FV{\varphi}\subseteq \valX$, and
    $T\in \Reg{\Sigma\times \bool \times \Sigma \times \bool^{\valX}}$.
  We assume that
    for any $v\in\bool^{\valX}$,
    the transducer $\{w~|~w\otimes v \in T\}$ represents a
    regular Kripke structure denoted $\calM_v$.
  Then, the extended semantics
  \[
    \esem{\varphi}(T)=\{
     s\otimes v ~|~
       \exists \mu\in\bbN^{\valX}:
       v = V(\mu,|s|)
       \wedge
        \calM_v,s,\mu \vDash \varphi
  \}\]
  can
  be recursively computed using boolean, \crossp{}
  and morphism operations on regular languages.
\end{lemma}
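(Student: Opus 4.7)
The plan is to prove the lemma by structural induction on $\varphi$, maintaining as invariant that $\esem{\varphi}(T)$ is a regular subset of $(\Sigma \times \bool^{\valX})^*$ effectively obtainable from $T$ via the closure of $\Reg{\cdot}$ under boolean operations, synchronous products, and (inverse) morphisms. Throughout, I rely on a fixed regular domain language $D_{\valX} \subseteq (\bool^{\valX})^*$ of well-formed valuation encodings: for each $x \in \valX$, the $x$-coordinate of the valuation stripe must contain exactly one $1$ and must be aligned in length with the state component.

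For the base cases, $\top$ yields $\Sigma^* \otimes D_{\valX}$; the index predicates $i = 0$, $i\%k = 0$, and $i = j+k$ each translate into a small DFA scanning the $i$- and $j$-coordinates of the valuation stripe for the appropriate positional patterns; the proposition $p_i$ is recognized by synchronously reading the $i$-coordinate and checking, using $L_i(s) = L_0(s[i])$, that the unique position marked by a $1$ in that coordinate carries a $\Sigma$-letter labelled by $p$. The boolean cases reduce to intersection and to complementation relative to $\Sigma^* \otimes D_{\valX}$. For $\exists i : \psi$, I first pad $T$ to the alphabet that additionally carries an $i$-coordinate by inverse morphism (leaving the new coordinate unconstrained), recurse to compute $\esem{\psi}$ in the enlarged alphabet, and finally apply the projection morphism forgetting $i$. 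For the modal case $\may{i}\psi$, I inductively obtain $X = \esem{\psi}(T)$, take the sub-transducer of $T$ whose observation stripe coincides (cell by cell) with the $i$-coordinate of the valuation stripe \emph{and} whose target-plus-valuation projection lies in $X$, and finally project away both the target $\Sigma$-component and the observation stripe.

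The main obstacle is the public announcement operator $\ann{\psi_1}{\psi_2}$, where the underlying Kripke structure is modified in a way that depends on the valuation $\mu$. The key observation is that the valuation coordinates already present in $T$ let me carry out this restriction uniformly in~$\mu$. I first inductively compute $X_1 = \esem{\psi_1}(T)$ and then build the restricted transducer
\[
T' \;=\; T \;\cap\; \bigl\{\, s \otimes b \otimes t \otimes v \;:\; s \otimes v \in X_1 \text{ and } t \otimes v \in X_1 \,\bigr\},
\]
which is regular because each of the two containment conditions can be enforced by a suitable inverse-morphism pullback of $X_1$ and then intersected. By construction, the slice $(T')_v$ encodes precisely $\annres{\calM_v}{\psi_1(\mu)}$ (and reflexivity is preserved, since any surviving state $s$ satisfies $\psi_1$ and its self-loop therefore survives). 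I then recurse to obtain $\esem{\psi_2}(T')$ and assemble the answer using the vacuous-truth clause:
\[
\esem{\ann{\psi_1}{\psi_2}}(T) \;=\; \bigl((\Sigma^* \otimes D_{\valX}) \setminus X_1\bigr) \;\cup\; \esem{\psi_2}(T').
\]
Every step uses only the closure operations allowed by the lemma statement, so the resulting language is regular and effectively computable, completing the induction. \Cref{thm:rmc} then follows by instantiating $\valX = \emptyset$ (so that the valuation stripe is trivial), making $\esem{\varphi}(T) = \sem{\varphi}(\calM)$ up to the trivial projection.
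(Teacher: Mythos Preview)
Your argument is the same structural induction as the paper's, and your handling of $\exists$, $\may{i}$, and the announcement operator matches the paper's constructions essentially term for term.

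There is one small but real slip. You set $\esem{\top}(T)=\Sigma^*\otimes D_{\valX}$, independently of $T$. But $\calM_v,s,\mu\vDash\top$ presupposes $s$ lies in the state space $S_v$ of $\calM_v$, so the correct value is the projection $\pi_{(\Sigma,-,-,\bool^{\valX})}(T)$ (optionally intersected with your $D_{\valX}$), exactly as the paper writes. Because your negation is complementation relative to $\Sigma^*\otimes D_{\valX}$, and your vacuous-truth clause for $\ann{\psi_1}{\psi_2}$ is $((\Sigma^*\otimes D_{\valX})\setminus X_1)\cup\esem{\psi_2}(T')$, the error propagates: after an inner announcement has already shrunk the model, your construction will emit pairs $s\otimes v$ with $s\notin S_v$. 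For instance, for $\ann{\psi_1}{\bot}$ you return $(\Sigma^*\otimes D_{\valX})\setminus X_1$ rather than $\esem{\top}(T)\setminus X_1$. The repair is immediate---replace the fixed universe $\Sigma^*\otimes D_{\valX}$ by the $T$-dependent projection---and with that change your proof coincides with the paper's.
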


\begin{proof}
  {
  \begin{itemize}
    \item $\esem{ \top }(T) = \pi_{(\Sigma,-,-,\bool^{\valX})}(T)$;
    \item 
        $\esem{ \varphi_1 \wedge\varphi_2 } (T) = 
         \esem{\varphi_1}(T) \cap
         \esem{\varphi_2}(T)$;
    \item $\esem{\neg\varphi}(T)  =
        \esem{\top}(T)
        \backslash
        \esem{
        \varphi
        }
        (T)
    $;
    \item
        An existential quantification over $i\in \valX$ is implemented
        by removing the information about $i$'s position.
        For $\alpha=t \otimes v\otimes x\in
        ((\Sigma\times \bool \times \Sigma)\times \bool^{\valX} \times \bool)^*$,
        we define
        $F(\alpha) = t \otimes v[i/x]$.
        Hence,
        $\esem{\exists i: \varphi}(T) = F(\esem{\varphi}(T)\otimes 0^*10^*)$;
    \item $i=j+k$ is encoded by the fixed regular expression:
        \[
        L_k =
        \left\{
        \begin{aligned}
            (0,0)^*(1,0)(0,0)^{k-1}(0,1)(0,0)^*
            &\text{~if $k>0$}\\
            (0,0)^*(1,1)(0,0)^*
            &\text{~otherwise}
        \end{aligned}
        \right.
        \]
        For $w\otimes v\in
            (\Sigma\times \bool\times \Sigma)\times\bool^\valX$,
        we consider the morphism $F$ defined on any
        tuple $\alpha = t\otimes v\otimes (v(i),v(j))$ by
        $F(\alpha) = t\otimes v$, so we
        finally have $
            \esem{i=j+k}(T) =
                F(\esem{\top}(T)\otimes L_k)$;
    \item $\esem{p_{i}}(T) = 
        \pi_{(\Sigma,-,-,\bool^{\valX})}(T)
        \cap
        A^* B A^*
        $,
        where
        \[
        A = 
        \{\alpha\in\Sigma~|~p\notin L(\alpha)\}
        \times
        \{v~|~v(i) = 0\}
        \]
        \[
        B =
        \{\alpha\in \Sigma~|~p\in L(\alpha)\}
        \times
        \{v~|~v(i) = 1\}
        \]
  \end{itemize}
  }
  \begin{itemize}
    \item $\esem{\may{a_i} \varphi }(T) = 
        \pi
        \left(
        T
        ~
        \cap
        ~
        A^*BA^*
        ~\cap~
        (\Sigma\times \bool)^* \otimes \esem{\varphi}(T)
        \right)
    $
    \[
    \text{where:}\quad\left\{
    \begin{aligned}
        A&=\Sigma \times \{0\}\times \Sigma \times
        \{v~|~v(i) = 0\} \\
        B&=\Sigma \times \{1\}\times \Sigma \times
        \{v~|~v(i) = 1\}\\
        \pi&(\alpha,\beta,\gamma,\eta)  = (\alpha,\eta)
     \end{aligned}
     \right.
    \]
    Intuitively, we intersect the transducer with legal moves
    where the current observational player matches the variable~$i$.
    We also intersect
    with the transducer that always ends up in a state and valuation
    satisfying~$\varphi$.
  \item
    The implementation of the public announcement is by far the most complex
    one as, we need first to
    introduce the public announcement transducer
  $T\{\varphi!\}$, encoding for any $v$, the regular Kripke structure obtained
  from $\calM_v$, after announcing $\varphi(\mu_v)$:
  \[
    T\{\varphi!\} = \bigcup_{v\in\bool^\valX}
        \left(\trans{\calM_v\{\varphi(\mu_v)!\}}\right)
        \otimes 
        \{v\}
  \]
 $T\{\varphi!\}$ is actually regular:
 we first build $\esem{\varphi}(T)$ in order to construct a regular
 Kripke on this state space.
 In order to do so, we define the morphism $F$ defined for~
 any\footnote{notice that the same valuation $v$ appears on both sides.}
 $t=w\otimes v \otimes x \otimes w' \otimes v\in (\Sigma\times \bool^\valX
 \times \Sigma\times \bool^\valX)^*$ by
 $F(t) = w\otimes x \otimes w' \otimes v$.
 Then, it remains to intersect the image transducer with the initial model:
 $T\{\varphi!\} = T\cap F(\esem{\varphi}(T) \otimes 0^*10^* \otimes
    \esem{\varphi}(T))$.
  Finally, we conclude with the implementation of the (vacuous truth) semantics
  of the public announcement:
  \[
    \esem{\ann{\varphi}{\psi}} =
        \esem{\neg \varphi}(T)\cup
        \esem{\psi}(T\{\varphi!\})
  \]
  \end{itemize}
\end{proof}

\begin{example}
  \label{ex:muddyregannouncement}
  Consider again the regular Kripke structure of \cref{fig:muddydfa} and
  the effect of publicly announcing "there is at least one muddy child":
  initially $\calM$ has state space $\Sigma^* = \{m,c\}^*$. After
  $\ann{\exists i: m_i}{}$, it is reduced to $\Sigma^*\{m\}\Sigma^*$.
  After announcing "no one knows (s)he muddy", namely
  $\ann{\forall i, \may{i}\neg m_i}{}$, it is further reduced to
  $\Sigma^*\{m\}\Sigma^*\{m\}\Sigma^*$.
  And after $k$ similar announcements, the resulting state space
  becomes $\Sigma^* \left(\{m\}\Sigma^*\right)^k$. This sequence of announcements,
  however, cannot continue forever as each iteration removes all states of
  length $k-1$.
\end{example}

As the reader easily infers, the \PPAL{} logic is suitable for the model
checking of regular Kripke structures of a given size, but cannot keep up in
the parameterized setting, when the number of announcements in the specification
depends on the parameter.

Informally, we would like to embed an arbitrary but finite number of
iterations, namely \emph{iterated public announcement} operator~\cite{MM05}:
    $\underbrace{
        \ann{\varphi}{}
        \ann{\varphi}{}
        \hdots
        \ann{\varphi}{}
    }_{\text{abritrarily~many~times}}
    \psi
    $
\begin{definition}
  \label{def:ppalstar}
  A formula $\varphi$ is in \PPALstar{} if it is in the grammar of
  \PPAL{}, augmented\footnote{
     The construction for a fixed $k\in\bbN$ is only a syntactic
     sugar useful.
  }with
  $\varphi ::= \annstar{\varphi}{\varphi}{k}~|~\annstar{\varphi}{\varphi}{*}$
  with $k\in\bbN$.
  The semantics is given by induction on $k$:
  \begin{itemize}
    \item $\sem{\annstar{\varphi}{\psi}{0}}(\calM) = \sem{\psi}(\calM)$;
    \item $\sem{\annstar{\varphi}{\psi}{k+1}}(\calM) =
        \sem{\annstar{\varphi}{\left(\ann{\varphi}{\psi}\right)}{k}}(\calM)
        $;
    \item $\sem{\annstar{\varphi}{\psi}{*}}(\calM) = 
        \bigcup_{k\geq 0} \sem{\annstar{\varphi}{\psi}{k}}(\calM)$.
  \end{itemize}
\end{definition}

\cref{thm:rmc} ensures that model checking of a regular Kripke
structure against a \PPAL{} formula is decidable, by reduction to regular
language universality problem. However, the translation of a formula
into a regular language may involve several exponential blow-ups, so the overall
running time may become non-elementary. Moreover,
this translation does not apply to the newly introduced $\annstar{\cdot}{}{*}$
operator, and decidability is not guaranteed in this case.
We clarify now these complexity questions:

\begin{theorem}
  \label{thm:decidability}
  There exists a regular structure $\calM$, such that:
  \begin{enumerate}
    \item Model checking against a \PPAL{} formula is non-elementary;
    \item Model checking against a \PPALstar{} formula is undecidable.
  \end{enumerate}
\end{theorem}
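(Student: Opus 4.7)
The plan is to exhibit a single, generic regular Kripke structure $\calM$ --- for instance, one whose state space is $\Sigma^*$ for a suitably rich alphabet $\Sigma$ together with a simple regular indistinguishability relation --- and reduce two classical hard problems to model checking over $\calM$. Since the theorem is existential in $\calM$, we are free to choose a convenient ``universal'' structure and concentrate all the complexity in the formula.

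For the first claim, I would target a reduction from the first-order theory of finite labelled words, whose non-elementary lower bound is classical (Stockmeyer--Meyer), or equivalently from satisfiability of star-free regular expressions. The logic \PPAL{} already supplies the essential ingredients: $\exists i$ quantifies over positions bounded by $|s|$; the atomic propositions $p_i$ play the role of letter predicates at a position; and the primitives $i = j + k$ and $i \% k = 0$ give Presburger-style arithmetic on positions. The one primitive not directly available --- strict order $i < j$ --- can be introduced either by enriching $\calM$ with labelling atomic propositions that materialize a linear ordering of positions, or by a one-shot public announcement whose filter encodes the order relation. A polynomial translation of every FO-over-words formula into a closed \PPAL{} formula then transfers the non-elementary lower bound to \PPAL{} model checking over $\calM$.

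For the second claim, I would reduce from the halting problem of a deterministic Turing machine $M$. Configurations of $M$ are encoded as words in $\calM$'s state space, and a \PPAL{} formula $\varphi$ is crafted so that its single announcement $\ann{\varphi}{\cdot}$ prunes configurations that have not yet been reached at the current horizon. The iterated operator $\annstar{\varphi}{\psi}{*}$ then corresponds to running $M$ for an arbitrary but finite number of steps, and asking whether a halting configuration survives reduces the undecidable halting problem to \PPALstar{} model checking.

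The main obstacle lies in the second reduction: public-announcement semantics is global and monotone --- the same set of states is stripped from the whole structure at once --- whereas a Turing-machine step is local and successor-directed. Aligning these two requires careful encoding: I would attach an explicit step counter in unary to each configuration word, and design $\varphi$ so that a configuration with counter value $k$ survives exactly when its unique predecessor with counter value $k-1$ is still present in the structure. Since Turing-machine transitions are recognizable by length-preserving transducers, the resulting $\varphi$ and $\calM$ remain within the regular framework, and each iteration of $\annstar{\varphi}{\psi}{*}$ faithfully advances the simulated computation by one step.
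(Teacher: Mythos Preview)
Your plan is reasonable in spirit but differs from the paper's proof in both parts, and in each part there is a concrete obstacle you have not resolved.

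\textbf{Part 1.} You aim at a reduction from the first-order theory of labelled words, using the positional fragment of \PPAL{} (quantifiers~$\exists i$, predicates~$p_i$, and the arithmetic $i=j+k$, $i\%k=0$). The difficulty you flag --- that $i<j$ is not primitive --- is real and not dispatched by your two suggestions. A public announcement filters \emph{states}, not pairs of positions, so a ``one-shot announcement encoding the order relation'' does not type-check; and ``labelling propositions that materialize a linear ordering'' would have to make the order visible \emph{inside a single state}, which you do not explain how to do with finitely many propositions. Without~$<$ you are left with something like $\mathrm{FO}[\mathrm{succ}]$, for which the non-elementary lower bound is not the standard Stockmeyer argument. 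The paper bypasses the issue entirely: it invokes a known result that already the \emph{modal} theory of a suitable automatic structure is non-elementary, and observes that modal logic embeds into \PPAL{} with a single agent. No position arithmetic is needed.

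\textbf{Part 2.} Here the crucial mismatch is between ``Turing-machine transitions are recognizable by length-preserving transducers'' and the shape of the theorem. The theorem fixes \emph{one} regular structure $\calM$; only the formula may depend on the machine. Hence the transition transducer cannot be installed as the indistinguishability relation of $\calM$. It must be expressed \emph{inside $\varphi$}, and this is where Turing machines hurt: to say ``some state in the current model is my predecessor'' you must compare two full tape contents across a $\may{\cdot}$ modality, i.e.\ assert $\forall i\,(\text{tape}_i \text{ here} \leftrightarrow \text{tape}_i \text{ there})$ --- but once inside $\may{\cdot}$ you have lost reference to the source state except through finitely many bound position variables. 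The paper's fix is to use \emph{Minsky (two-counter) machines}: a configuration $(q,x_0,x_1)$ is captured by three position variables, which \emph{are} carried across the modality, so the successor test becomes a bounded formula over the complete relation $\Sigma^*\otimes 0^*10^*\otimes \Sigma^*$. The paper also runs the fixpoint \emph{backwards}: the announced $\varphi_t$ says ``I still have a successor in the current model'', so halting configurations disappear first, then their predecessors, and one checks $\annstar{\varphi_t}{\bot}{*}$ from the initial configuration. This avoids your unary step counter altogether (and the attendant worry that forward survival under $\annstar{\varphi}{\psi}{*}$ is already witnessed at $k=0$ whenever $\psi$ is purely propositional).
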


\begin{proof}
  \begin{enumerate}
    \item In \cite[Proposition 20]{wt-csl09}, the author constructs an automatic
    structure $\calM$
    whose modal logic theory is non-elementary. Modal logic can be seen
    as a particular fragment of \PPAL{}, with only one agent.
    An automatic structure can also be seen as a
    regular Kripke structure with only one agent. The hardness reduction is
    therefore immediate.

  \item 
    We construct now a regular Kripke structure $\calM$ such that its \PPALstar{}
    theory is undecidable.
    { 
    To this end, we encode the Minsky machine halting problem~\cite{minsky67}:
    a 2-counter (Minsky) machine is a tuple $(Q,q_0,q_f,\delta)$
    where
    \begin{itemize}
      \item $Q$ is a finite subset;
      \item $q_0\in Q$ is the initial state;
      \item $q_f\in Q$ is the final state;
      \item $\delta\subseteq Q\times \{{test},{inc},{dec}\}\times \bool\times Q$
        is the set of transitions.
    \end{itemize}
    The semantics of such a machine is defined over the configuration space
    $Q\times \bbN^2$, with $(q,x_1,x_2) \xrightarrow{t} (r,y_1,y_2)$ if, and only if,
    the following conditions hold:
    \begin{itemize}
      \item $t=(q,{op},i,r)\in \delta$ for some
        \mbox{$({op},i)\in \{{test},{pos},{inc},{dec}\}\times \bool$};
      \item $x_{1-i} = y_{1-i}$;
      \item if ${op}={test}$, then $x_i = y_i = 0$;
      \item if ${op}={pos}$, then $x_i = y_i > 0$;
      \item if ${op}={inc}$, then $x_i +1 = y_i$;
      \item if ${op}={dec}$, then $x_i = y_i+1$;
    \end{itemize}
    We say that such a machine terminates if there exists a finite path from
    configuration
    $(q_0,0,0)$ to configuration $(q_f,x_0,x_1)$ for some $(x_0,x_1)$.

    Moreover, we assume, without loss of generality, our \mbox{2-counter} machines
    to be deterministic, namely: if for any configuration $\gamma$, there exists
    at most one $t\in \delta$ such that $\gamma\xrightarrow{t} \gamma'$ for
    some $\gamma'$.

    We consider the regular Kripke structure $\calM$ with
    $\AP=\{p,c^{(1)},c^{(2)}\}$, and $\trans{\calM}$ the complete transducer
    $\Sigma^* \otimes 0^*10^* \otimes \Sigma^*$.
    
    Given a 2-counter machine $(Q,q_0,q_f,\delta)$, we construct a formula
    $\varphi$ such that the machine terminates if, and only if,
    $\sem{\varphi}(\calM) \neq \emptyset$.
    We assume for our encoding that $Q=\range{|Q|}$, allowing us to encode
    the current state as a unary position.

    Let's first restrict the model to states where each proposition is true
    at exactly one position, by announcing:
    \[\varphi_{m} = \exists i_0,i_1,i_2:
      \left\{\begin{aligned}
        c^{(0)}_{i_0} \wedge
        c^{(1)}_{i_1} \wedge
        p_{i_0} = j
        \\
        \forall j, 
        c^{(0)}_j \rightarrow i_0 = j \wedge
        c^{(1)}_j \rightarrow i_1 = j \wedge
        p_j \rightarrow i_0 = j
      \end{aligned}
      \right.
    \]
    Intuitively, a configuration $(q,x_0,x_1)$ will be encoded
    as the state word
    $V(x_0,l) \otimes V(x_1,l)\otimes V(q,l)$ for any $l\geq {max}(q,x_0,x_1)$.

    We construct now a formula $\varphi_t$ expressing that the current configuration
    still has successor:
    \[
      \varphi_t = \exists i_0,i_1,j\neq q_f:
        c^{(0)}_{i_0}\wedge c^{(1)}_{i_1} \wedge
        p_{j} \wedge 
        \bigvee_{(q,{op},k,q')\in \delta}\hspace{-2em}
            j = q \wedge 
            \may{0} (p_{q'} \wedge \varphi_{{op},k})
    \]
    where
    \[
    \varphi_{{op},k} = \left\{\begin{aligned}
      i_k = 0 \wedge c^{(0)}_{i_0} \wedge c^{(1)}_{i_1} 
        & \text{~when ${op}={test}$}\\
      \exists l: i_k = l +1 \wedge c^{(0)}_{i_0} \wedge c^{(1)}_{i_1} 
        & \text{~when ${op}={pos}$}\\
      \exists l: i_k = l+1 \wedge c^{(k)}_{l} \wedge c^{(1-k)}_{i_{1-k}} 
        & \text{~when ${op}={dec}$}\\
      \forall l, l = i_k +1 \rightarrow c^{(k)}_{l} \wedge c^{(1-k)}_{i_{1-k}}
        & \text{~when ${op}={inc}$}\\
    \end{aligned}\right.
    \]

    Note that a state $s$ encoding the configuration $(q,x_0,x_1)$,
    where a increment of $x_i$ is available
    but $x_i = |s|-1$, is never removed, even though state $s$ has no successor
    in the current Kripke structure: we adopt this convention since
    any state in $(0^{+}\cdot s)$ still has a successor.

    We encode now the whole formula:
    \[
        \varphi = 
        p_{q_0} \wedge c^{(0)}_0 \wedge c^{(1)}_1 \wedge
        \ann{\varphi_m}{
            \annstar{\varphi_t}{\bot}{*}
        }
    \]

    We claim that 
    $\sem{\varphi}(\calM) \neq \emptyset$
    if, and only if,
    the machine terminates.

    \begin{itemize}
      \item If the machine terminates, there exists a finite run\\ 
        $(q_0,x_0, y_0)\hdots
         (q_n,x_{n}, y_{n})$ with $q_n = q_f$ and $x_0 = y_0 = 0$.\\
        We let $l = {max}\{ x_i, y_i ~|~0\leq i \leq n\}$.
        We prove by
        (decreasing) induction on $i$ that
        $V(x_i,l) \otimes V(y_i,l)\otimes V(q_i,l)
            \in \sem{
            \ann{\varphi_m}{
                \annstar{\varphi_t}{\bot}{*}
            }}(\calM)$.
        The result holds for $i=n$ since $\varphi_t$ is not satisfied for $q_f$,
        then, we follow the semantics definition for the induction case:
        by determinacy of the machine, there exists at most one valid instruction
        $(q_{i-1},{op},k,q_i)$. Because no state can satisfy $\bot$, the
        state $V(x_i,l) \otimes V(y_i,l)\otimes V(q_i,l)$ has to eventually be
        removed, hence $\varphi_t$ becomes eventually unsatisfied
        from $V(x_{i-1},l) \otimes V(y_{i-1},l)\otimes V(q_{i-1},l)$.
      \item For the converse implication, assume the machine has an infinite
        run denoted with $(q_i,x_i,y_i)_{i\in\bbN}$.
        For a fixed $l \in \bbN$, we check by induction on $k$, that for all~$i$,
        such that $ l \geq {max}\{x_i,y_i,q_i\}$, we still have
            $\calM, V(x_i,l) \otimes V(y_i,l)\otimes V(q_i,l)\not\vDash 
                \ann{\varphi_m}{\annstar{\varphi_t}{\bot}{k}}
            $.
        \begin{itemize}
            \item For $k=0$, the formula reduces to $\ann{\varphi_m}{\bot}$, and
            since all states are proper encoding of configurations, they satisfy
            $\varphi_m$ but not $\bot$.
            \item Assuming the result is true for some $k\in\bbN$, we prove
            the result still holds for $(q_i,x_i,y_i)$ at step $k+1$, by
            applying either applying the induction hypothesis on\\
            $(q_{i+1},x_{i+1},y_{i+1})$ at step $k$, or checking that
            $x_{i+1}>l$ or $y_{i+1}$ meaning
            $V(x_i,l) \otimes V(y_i,l)\otimes V(q_i,l)$ satisfies $\varphi_t$
            through its increment clause.
        \end{itemize}
        That is to say:
        $\sem{\varphi}(\calM)\cap \Sigma^l = \emptyset$.
        Since the result holds for any $l\in\bbN$, we conclude that
        $\sem{\varphi}(\calM)= \emptyset$.
    \end{itemize}
    } 
 \end{enumerate}
\end{proof}

\section{Disappearance Relation}
\label{sec:disrel}

We study in this section the limit behaviour induced by an operator restricting
incrementally the state space. This study is motivated by the \PPALstar{} construction
$\annstar{\varphi}{\bot}{*}$, whose semantics can be seen as the set of states
\emph{not} being removed, after arbitrarily many state space restriction operated
by the public announcement $\ann{\varphi}{}$. 

For the rest of this section, we fix a more general setting with:
\begin{itemize}
  \item An alphabet $\Sigma$;
  \item An initial state space $\St = \St_0 \subseteq \Sigma^*$;
  \item A function
  $F: 2^{\Sigma^*} \rightarrow 2^{\Sigma^*}$
  restricting the state space, that is to say:
    $\forall X\subseteq \Sigma^*, F(X) \subseteq X$.
  \item For all $k\in \bbN$, we let $\St_{k+1}=F(\St_k)$ and
    $\St_\infty = \cap_{k\geq 0} \St_k$.
\end{itemize}

Moreover, we assume $\St$ to be a regular language, and $F$ to preserve
regular languages, in a way that will later be clarified.

Our study aims at computing the limit set of states $\St_\infty$. Despite
our assumptions, this set is not in general regular nor computable,
as one can observe
as a consequence of the undecidability result of \cref{thm:decidability},
or the following counterexample.
\begin{example}
  \label{ex:counting}
  Consider $\St_0 = \{a\}^*\{b\}^*$, and for all $X\subseteq \St_0$,
  define $F(X)=\{a^nb^m~|~n=m=0 \vee (nm>0 \wedge a^{n-1}b^{m-1} \in X)\}$.
  Then, for any $k\in\bbN$,
  $\St_k = \{a^ib^i~|~0\leq i < k\}\cup \{a^{k+n}b^{k+m}~|~n,m\geq 0\}$
  is regular, but not its limit $\St_{\infty}=\{a^ib^i~|~0\leq i\}$.
\end{example}

\subsection{Unique Characterization}
Let us first remark that the application $F$ is not necessarily
monotone. Consider for example the announcement
$\forall i, m_i \vee \knows{i}\neg m_i$ which reads:
\begin{center}\enquote{
    every non-muddy child knows he's not muddy.
}\end{center}
Then $F(\{cm\}) = \{cm\}$ but $F(\{cm,mm\})=\emptyset$.
As a consequence, $\St_\infty$ is a fixed point of $F$,
but cannot be characterized as the smallest nor the greatest one.
Hence, we narrow down our computation goal by introducing the
following pre-order over states:
\begin{definition}
  \label{def:disrel}
  The \emph{disappearance} relation $\preceq$ is defined
  for every $(s,t) \in \St^2$, by:\[
    s\preceq t
    \quad\text{~if, and only if,~}\quad
        \forall k\in\bbN,
        s\in \St_k
         \Rightarrow
        t \in \St_k
    \]
\end{definition}
Intuitively, $s\preceq t$ means that $s$ disappears from the state space
before~$t$.
$\preceq$ is a total
pre-order, i.e. any two elements are comparable,
the relation is reflexive, transitive, but not necessarily antisymmetric.
Notice that $\St_\infty$ can be characterized as the set of maximal elements
of $\preceq$.

In order to reason over set of states induced by a pre-order, we introduce
the following notations:
\begin{definition}
  \label{def:upweq}
  For a relation $R\subseteq \St\times \St$, and any $s\in \St$, we define
  the \defn{upward-closure} and \defn{equivalence class} of~$s$ by
        $\upw_R s = \{u\in \St~|~(s,u)\in R\}$,
        and
        $[s]_R = \{u\in \St~|~(s,u)\in R \wedge (u,s)\in R\}$,
        respectively.
  We omit the subscript notation when $R=\preceq$. 
\end{definition}
As suggested by its name, the latter notion involves an equivalence relation,
namely $R\cap R^{-1}$ which relates states of $\St$ disappearing at the same
iteration.
On the other hand, the upward-closure $\upw s$ can be interpreted as one
of the iterated $\St_k = \upw s$ for some $k\in\bbN\uplus\{\infty\}$.
When $k<\infty$,
we know this is the \emph{last} iteration before $s$ and all its
equivalent states got removed.
This entails $\St_{k+1} = \St_{k}\backslash [s]$, hence
$F(\upw s) = \upw s \backslash [s]$.
When $k=\infty$, we know on the contrary, that $s$ never disappears, which
also means $[s] = \upw s = \St_\infty$.

\begin{figure}

      \begin{tikzpicture}
        \tikzset{rn/.style={draw, ellipse}}
      
        \node[rn] (q0) {$\St_0 \backslash \St_1$};
        \node[rn, anchor=east]
                  at ($(q0.west)+(\columnwidth,0)$)
                          (q3) {$\St_{\infty}$};
        \node[rn] at ($(q0)!0.4!(q3)$) (q1) {$\St_i\backslash \St_{i+1}$};
        \node[rn] at ($(q1)!0.4!(q3)$) (q2) {$\phantom{\St_{i+1}}$};
        \node at (q2) {\footnotesize $\St_{i+1}\backslash\St_{i+2}$};
      
        \path (q0) edge[draw=none]
                      node[pos=0.25] {$\preceq$}
                      node[pos=0.5] {$\hdots$} 
                      node[pos=0.75] {$\preceq$}
                      (q1);
        \path (q1) edge[draw=none] node {$\preceq$} (q2);
        \path (q2) edge[draw=none]
                      node[pos=0.25] {$\preceq$}
                      node[pos=0.5] {$\hdots$} 
                      node[pos=0.75] {$\preceq$}
                      (q3);
        
        \draw [decorate,decoration={brace,amplitude=10pt,aspect=0.75},
        transform canvas={yshift=-1.9em}]
        (q3.east) -- (q1.west) node [black,pos=0.75,below,yshift=-0.7em,xshift=0.6em]
      {\footnotesize $\St_i = \upw s$};
        \draw [decorate,decoration={brace,amplitude=10pt,aspect=0.75},
        transform canvas={yshift=-0.6em}]
        (q3.east) -- (q2.west) node [black,pos=0.75,below,yshift=-0.7em, xshift=1.3em,anchor=north]
      {\footnotesize $\St_{i+1} = F(\upw s)$};
      
      \node (exs) at
          ($(q0)+(3em,-2.9em)$)
          {$s$};
      \node (ext) at ($(exs)+(2em,0)$) {$t$};
      \path (exs) edge[draw=none]
          node[yshift= 0.2em] {$\prec$}
          node[yshift=-0.2em] {$\succ$}
          (ext);
      \draw ($(q1)+(-2em,-0.3em)$) edge[bend right] (exs);
      \draw ($(q1)+(-0.8em,-0.6em)$) edge[bend right] (ext);
      
      \end{tikzpicture}
  \caption{Hierarchy of the equivalence classes of the disappearance relation}
  \label{fig:hierachy}
  \Description{
    The image describes a series of blocks, ordered by disappearance relation.
    All blocks except the last one correspond to the state space at some
    finite iteration, minus the one at next iteration. Last block represents
    the limit of all iterations, remaining states. Two states in the same
    block are equivalent meaning they both greater/smaller than each other.
    Given a state in a block, taking the upward closure describes all states
    in his block and all upper blocks up to infinity.
    Applying the operator on this upward closure removes the smaller
    block from the description, keep all upper blocks, up to infinity.
  }
\end{figure}
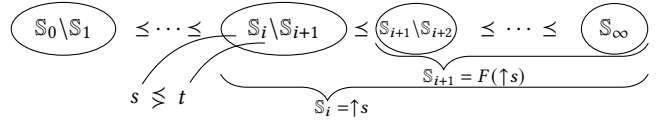
This setting is summarised in the following~\cref{fig:hierachy} and~\cref{prop:orderchr},
the latter also provides a unique characterization under certain
conditions:
\begin{proposition}
  \label{prop:orderchr}
  Let $R\subseteq \St \times \St$.
  If $R = \preceq$, then:
  \begin{enumerate}
    \item $R$ is a total pre-order on $\St$, and
    \item for $s \in \St$,
        $[s]_R = {\uparrow}_R s\backslash F({\uparrow}_R s)$
        or
        $[s]_R = {\uparrow}_R s = F({\uparrow}_R s)$.
  \end{enumerate}
  Moreover, the converse holds whenever $\St$ is finite.
\end{proposition}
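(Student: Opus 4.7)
The plan is to introduce, for each $s \in \St$, the quantity $k(s) = \sup\{k \in \bbN : s \in \St_k\} \in \bbN \cup \{\infty\}$. Since $(\St_k)_{k}$ is non-increasing (by $F(X) \subseteq X$), we have $s \in \St_k \Leftrightarrow k \le k(s)$, and consequently $s \preceq t \Leftrightarrow k(s) \le k(t)$. This immediately yields (1): reflexivity, transitivity, and totality of $\preceq$.

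For (2), I split on whether $k(s)$ is finite. If $k(s) = k < \infty$, then $\upw s = \St_k$ and $[s] = \St_k \setminus \St_{k+1}$; since $F(\upw s) = F(\St_k) = \St_{k+1}$, this yields the first alternative $[s] = \upw s \setminus F(\upw s)$. If $k(s) = \infty$, then $\upw s = [s] = \St_\infty$; when $\St$ is finite, the chain $(\St_k)$ stabilizes at some index $K$, so $F(\St_\infty) = F(\St_K) = \St_{K+1} = \St_K = \St_\infty$, matching the second alternative.

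For the converse (assuming $\St$ finite), I enumerate the equivalence classes of $R$ as $C_1 \prec_R C_2 \prec_R \cdots \prec_R C_n$, a finite chain since $\St$ is finite and $R$ is a total pre-order. I then prove by induction on $i$ that $\St_{i-1} = C_i \cup \cdots \cup C_n$. The base $\St_0 = \St$ is immediate. For the step, pick $s \in C_i$; by the induction hypothesis $\upw_R s = \St_{i-1}$, so property (2) yields either $F(\St_{i-1}) = \St_{i-1} \setminus C_i = C_{i+1} \cup \cdots \cup C_n$ (matching the target for $\St_i$), or $F(\St_{i-1}) = \St_{i-1}$ together with $C_i = \St_{i-1}$ (which forces $i = n$ and stabilizes the chain). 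Either way, $\St_{i-1} \setminus \St_i = C_i$ for $i < n$ and $\St_\infty \in \{\emptyset, C_n\}$, so the equivalence classes and the ordering of $\preceq$ coincide with those of $R$, establishing $R = \preceq$.

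The main subtlety I anticipate is the $k(s) = \infty$ case of the forward direction when $\St$ is infinite: since $F$ is not monotone, $F(\St_\infty)$ can a priori be a proper nonempty subset of $\St_\infty$, in which case neither alternative of (2) would apply. For finite $\St$ this cannot occur because the chain must stabilize, and since the converse is only asserted under finiteness, the characterization of $\preceq$ holds unambiguously in the paper's intended regime.
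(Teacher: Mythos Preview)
Your proof is correct and follows essentially the same approach as the paper. The paper organizes the converse as an induction on $|\St|$ (peeling off the $R$-minimal class and recursing on $\St_1$) rather than on the class index, but the underlying argument is identical; your observation that the forward direction's second alternative can fail for infinite $\St$ when $\emptyset \subsetneq F(\St_\infty) \subsetneq \St_\infty$ is valid and is a subtlety the paper's sketch also leaves unaddressed.
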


\begin{proof}[Sketch]
  A proof of the direct implication being already sketched above, we focus
  on the converse implication:\\
  Assuming that $R$ satisfies the above conditions and $\St$ is finite,
  we prove that $R=\preceq$, by induction on $|\St|$.

  The result is trivial when $|\St|=0$, consider now $\St\neq \emptyset$
  and some
  minimal $s\in \St$ with respect to the total pre-order~$R$, that is to say
  ${\uparrow}_R s = \St$.
  \begin{itemize}
    \item If
      $[s]_R = {\uparrow}_R s\backslash F({\uparrow}_R s)$,
      then $[s]_R = \St_0 \backslash F(\St_0) = \St_0 \backslash \St_1$.
      In particular, $s\notin F(\St_0) = \St_1$.
      Consider $\preceq' = \preceq \cap \St_1\times \St_1$
      and $R'= R\cap \St_1\times \St_1$.
      We easily check
      that $\preceq'$ is the disappearance relation of $F$, with initial set
      $\St_1$, and $R'$ is a total pre-order on $\St_1$ such that for every
      $s\in \St_1$,
        $[s]_R = [s]_{R'}$ and $\upw_{R} = \upw_{R'}$.
      We can therefore apply the induction hypothesis on $\St_1$:
      $\preceq $ and $R$ coincide on $\St_1\times \St_1$.
      
      It remains to show that this is also the case over
      $\St_0\times\St_0 \backslash (\St_1 \times \St_1)$:
      for $u\in \St_0\backslash \St_1=[s]_R$, and $t\in\St$,
      $(u,t)\in R$ (by equivalence to $s$) and $u \preceq t$ since $u\notin \St_1$
      and $t\in\St_0$.
      On the other hand, by minimality of $s$,
      $(t,u)\in R \Leftrightarrow t\in [s]_R \Leftrightarrow
      t \preceq s$.
      We conclude in this case that $\preceq = R$.
    \item If
      $[s]_R = {\uparrow}_R s = F({\uparrow}_R s)$,
      then $[s]_R = \St_{\infty}$ and $R = \preceq = \St\times \St$.
  \end{itemize}
\end{proof}

\subsection{Learning Procedure}
\label{ssec:learnproc}

The unique characterization of \cref{prop:orderchr} paves the way to a
learning procedure for computing~$\preceq$.
More precisely, we consider for this section an encoding of $\preceq$ seen as
as a language over pairs of letters:
    $
    \Lp =
    \left\{
        s \otimes t
        ~\middle|~
        |s| = |t| \wedge s \preceq t
    \right\}
    \subseteq
    \left(\Sigma \times \Sigma\right)^*
$\\
Assuming $\Lp$ is a regular language, we will develop a learning procedure
to construct it.
On the one hand, notice
that this definition of $\Lp$ looses some information about $\preceq$
as it can only relate states of the same length.
On the other hand, this restriction is not crucial as the $\PPAL{}$ logic
is exclusively based on length-preserving transducers. We keep the following
requirement:
\begin{center}
    $(R1)$: {\enquote{$F$ is length-preserving}}
\end{center}
Strictly speaking,
we assume $\Lp$ to be the representation of the family
$(\preceq_k)_{k\in\bbN}$, where for each $k$, $\preceq_k$ is the disappearance
relation starting from the initial state space $\St_0 \cap \Sigma^k$.
As $\Sigma^k$ is finite for a given $k$, this restriction further allows
us to provide a unique characterization of $\Lp$, as provided by
\cref{prop:orderchr}.

We now introduce the \Lstar{} algorithm from Angluin, which allows us 
to learn a finite automaton $\calA$, or equivalently a target
regular language $L_t\in\Reg{\Sigma_t}$, based on queries answered by
an Oracle.
Such an Oracle has to answer \emph{membership} and
\emph{equivalence} queries, by direct access to the target language
or by indirect means.

We explain the exact semantics of \Lstar{} queries for a target
language language~$L_t\in \Reg{\Sigma_t}$, and how they
are answered in this learning procedure, where the target language
is~$\Lp\in\Reg{\Sigma\times \Sigma}$:
\begin{itemize}
  \item \textbf{Membership Queries:} the Oracle is asked 
    whether a given word $w\in \Sigma_t^*$ is in the target language $L_t$.\\
    \textbf{Answer:} we let $s,t \in \Sigma^{|w|}$ with $s\otimes t=w$
    and decide whether $s\preceq t$.
    We proceed to the iterative computation of the sets $(\St_k)_{k\in\bbN}$ and stop
    whenever $s$ or $t$ is no more in the set.
    This is however a semi-decision procedure as it may fail in the case
    where neither $s$ nor $t$ disappear ($s,t\in \St_\infty$).
    To circumvent this issue, we perform the computation on the restricted
    state space of a fixed length $|w|$, namely
    $\St_k\cap \Sigma^{|w|}$, ensuring a finite cardinality.
    As soon as $s,t\in \St_k\cap \Sigma^{|w|}= \St_{k+1}\cap \Sigma^{|w|}$,
    we conclude that $s\preceq t$.
    This leads to our second requirement:
    \begin{center}
        $(R2)$: \enquote{$F$ restricts the state space
    independently for different state sizes.}
    \end{center}
  \item \textbf{Equivalence Queries:}
    Given a candidate language $L$, the Oracle is asked whether
    $L=L_t$ and if not, provides a counterexample $w\in L\backslash L_t \cup L_t \backslash L$.\\
    \textbf{Answer:} we use of \cref{prop:orderchr}, which can be seen
    as a first order characterization of $\preceq$, and translate the listed
    criteria into equivalence problems over regular languages.
    If one regular language equivalence fails, we have to provide a counter
    example to the learning procedure. Unfortunately, a counterexample
    to a criterion of \cref{prop:orderchr} does not directly
    provide a counterexample for \Lstar. For example, a counterexample for
    the transitivity property would consist in a triple
    $(s_1,s_2,s_3) \in \St^3$ such that $s_1\otimes s_2 \in L$,
    $s_2 \otimes s_3\in L$ but $s_1\otimes s_3 \notin L$, and it wouldn't be clear
    whether the property fails because either $(s_1,s_2)$ or $(s_2,s_3)t$
    should be removed from $L$
    or because $(s_1,s_3)$ should be added.
    Nonetheless, since a counterexample was provided for a fixed length $l$,
    $L$ restricted to $(\Sigma\times \Sigma)^{l}$ is not
    r proper encoding of \mbox{$\preceq\cap \Sigma^l \times \Sigma^l$}. A direct
    enumeration of the sequence $(\St_k\cap \Sigma^l)_{k\geq 0}$ will therefore terminate
    and therefore will provide a counterexample.
\end{itemize}

\subsection{Effective and Uniform Regularity}
In order to effectively implement the procedure, we provide the following
equivalent characterization of \cref{prop:orderchr}, in terms
of first-order formulae.
\begin{proposition}
  \label{prop:implemchr}
  Let $R\subseteq \Sigma^*\times \Sigma^*$ and $k\in\bbN$.\\
  $R\cap (\Sigma^k\times \Sigma^k) \neq \preceq_k$
  if, and only if, \emph{any} one of the conditions holds:
  \begin{enumerate}
    \item $\exists s,t: (s,t)\in R \wedge (s,t)\notin \St\times \St$;
    \item $\exists s: (s,s)\notin R$;
    \item $\exists s_1,s_2,s_3: (s_1,s_2)\in R\wedge (s_2,s_3)\in R \wedge
        (s_1,s_3)\notin R$;
    \item $\exists s,t: (s,t)\notin R\wedge (t,s)\notin R$;
    \item \[
    \exists s,t_1, t_2:
        \left\{\begin{aligned}
            (s,t_1) \in R  &\wedge (s,t_2)\in R \\
            (t_1,s)\notin R &\not\leftrightarrow t\in F({\uparrow}_R s) \\
            (t_2,s)\notin R &\vee t_2\notin F({\uparrow}_R s)
       \end{aligned}
       \right.
       \]
  \end{enumerate}
  Where all quantifications are made over $\Sigma^k$.
\end{proposition}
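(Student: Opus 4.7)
The plan is to reduce the statement to Proposition~\ref{prop:orderchr} applied to the \emph{finite} set $\St \cap \Sigma^k$. By assumption $(R2)$, $F$ restricts the state space independently for different state sizes, so its action on subsets of $\St \cap \Sigma^k$ is well-defined, and $\preceq_k$ is the disappearance relation of this finite restriction. Writing $R_k := R \cap (\Sigma^k \times \Sigma^k)$, Proposition~\ref{prop:orderchr} then gives, in both directions, $R_k = \preceq_k$ if and only if $R_k$ is a total pre-order on $\St \cap \Sigma^k$ and, for every $s \in \St \cap \Sigma^k$, either $[s]_{R_k} = \upw_{R_k} s \setminus F(\upw_{R_k} s)$ or $[s]_{R_k} = \upw_{R_k} s = F(\upw_{R_k} s)$. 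Negating this characterisation, $R_k \neq \preceq_k$ is equivalent to the disjunction of five failures: (a)~$R_k$ not being a relation on $\St \cap \Sigma^k$, (b)~not reflexive, (c)~not transitive, (d)~not total, and (e)~the structural equivalence class requirement failing at some $s$. The bulk of the proof is then matching (a)--(e) to conditions~1--5.

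Conditions~1 through 4 are direct first-order translations of (a)--(d), with quantifications over $\Sigma^k$ matching the restriction to $R_k$: condition~1 witnesses a pair of $R$ whose components do not both lie in $\St$; conditions~2 and 3 encode the negations of reflexivity and transitivity in the standard way; condition~4 exhibits two states with no $R$-edge between them in either direction.

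Condition~5 is the main point and captures (e). Fix $s$ and set $U = \upw_{R_k} s$. The first clause in Proposition~\ref{prop:orderchr}, $[s]_{R_k} = U \setminus F(U)$, asserts that for every $t \in U$, $(t,s) \in R$ holds iff $t \notin F(U)$; its failure is witnessed by some $t_1 \in U$ for which these two predicates disagree, i.e.\ $(t_1,s) \notin R \not\leftrightarrow t_1 \in F(U)$. The second clause, $[s]_{R_k} = U = F(U)$, asserts that every $t \in U$ lies simultaneously in $[s]_{R_k}$ and in $F(U)$; its failure is witnessed by some $t_2 \in U$ with $(t_2,s) \notin R$ or $t_2 \notin F(U)$. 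Since the structural requirement at $s$ is the disjunction of these two clauses, its failure yields both $t_1$ and $t_2$ simultaneously (possibly equal), and the conjunction in condition~5 is exactly this statement, with the preamble $(s,t_1), (s,t_2) \in R$ expressing $t_1,t_2 \in U$.

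The main obstacle is the careful bookkeeping for condition~5: one must verify that the XOR formulation captures the failure of the first clause in both its sub-cases ($(t_1,s) \in R$ with $t_1 \in F(U)$, and $(t_1,s) \notin R$ with $t_1 \notin F(U)$), and that permitting $t_1 = t_2$ introduces no over-specification when a single element witnesses both failures. Once this is settled, the correspondence between the five first-order conditions and the five failure modes reduces to a routine unwrapping of definitions.
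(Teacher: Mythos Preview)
Your proposal is correct and follows essentially the same route as the paper: both reduce to Proposition~\ref{prop:orderchr} on the finite restriction $\St\cap\Sigma^k$, identify conditions~1--4 with the failures of being contained in $\St\times\St$, reflexivity, transitivity, and totality, and then unwind condition~5 as the simultaneous failure of the two alternatives for $[s]_R$. The only cosmetic difference is that the paper first explicitly uses the inclusions $F(X)\subseteq X$ and $[s]_R\subseteq\upw_R s$ to rewrite the second alternative $[s]_R=\upw_R s=F(\upw_R s)$ as the single inclusion $\upw_R s\subseteq [s]_R\cap F(\upw_R s)$ before negating, whereas you state the equivalent pointwise formulation directly; the resulting witnesses $t_1,t_2$ and the factoring by $(s,t_i)\in R$ are identical.
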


{
\begin{proof}
    Property~$(1)$ enforces $R\subseteq \St\times\St$ while
    properties~$(2)-(4)$ encode respectively reflexivity, transitivity and
    totality, as stated by \cref{prop:orderchr}, after
    taking the negation.

    We provide here a proof of property~$(5)$ built on top on the second
    property
    not being fulfilled.
    Recall first that $F(X)\subseteq X$ for all $X$,
    and $[s]_R \subseteq \upw_R s$ for any $s$, hence
    condition $[s]_R = \upw_R s = F(\upw_R s)$ is equivalent
    to $\upw_R s \subseteq [s]_R \cap F(\upw_R s)$.\\
    After taking the negation, the second property
    of~\cref{prop:orderchr} becomes:
    $
        \exists s:
        [s]_R \neq \upw_R \backslash F(\upw_R s) ~\wedge~
        \upw_R s \not\subseteq [s]_R \cap F(\upw_R s)
    $
    which is equivalent to:
    \[
        \exists s,t_1,t_2:
        \left\{\begin{aligned}
            ((s,t_1)\in R \wedge (t_1,s)\in R)
            \not\leftrightarrow
            ((s,t_1)\in R \wedge t_1 \notin F(\upw_R s))
        \\
            (s,t_2)\in R \wedge
            \left(
            (t_2,s)\notin R \vee
            t_2 \notin F(\upw_R s)
            \right)
        \end{aligned}\right.
    \]
    Hence, after factoring by $(s,t_1)\in R$:
    \[
        \exists s,t_1,t_2:
        \left\{\begin{aligned}
            (s,t_1) \in R &\wedge 
            ((t_1,s)\in R
            \not\leftrightarrow
            t_1 \notin F(\upw_R s))
        \\
            (s,t_2)\in R &\wedge
            \left(
            (t_2,s)\notin R \vee
            t_2 \notin F(\upw_R s)
            \right)
        \end{aligned}\right.
    \]

\end{proof}
}

Based on this first-order characterization, we provide an actual
implementation of equivalence queries on the candidate language $L$,
by resorting to queries on length-preserving transducers, namely
regular languages over $\Sigma\times\Sigma$.
For example, Property~$(1)$ is translated to the query
    $
    L \cap \overline{\St\otimes \St} \stackrel{?}{=} \emptyset
$.

While the predicates $\St\times \St$ and $R$ can be encoded as
the regular languages $\St \otimes \St$ and $L_c$, respectively,
property $(5)$ involves
the computation of the operator $F$ as the following binary predicate:
\[
  F(\upw_R \cdot)=
  \{
    (s,t)~|~s\in F(\upw_R t)
  \}
\]
This condition
is introduced as the last requirement:
\begin{center}
  $(R3)$: \enquote{$F$ is effective and uniformly regular}
\end{center}
Conditions $(R1)-(R3)$ are formally defined through the following conditions:
\begin{definition}
  \label{def:lenuregfun}
  Let $G$ be a function from $2^{\Sigma^*_1}$ to $2^{\Sigma^*_2}$.
  \begin{itemize}
    \item $G$ is \emph{independently length-preserving} if:\\
        $\forall l\in \bbN~\forall X\subseteq \Sigma_1^*,
            G(X\cap \Sigma_1^l) = G(X)\cap \Sigma_2^l$;
    \item $G$ is
        \emph{effectively uniformly regular} if:\\
        For any given alphabet $\Sigma'$
        and $L\in\Reg{\Sigma'\times \Sigma_1}$,
        the following language is
        regular and computable:
        \[
            \left\{
                w'\otimes w_2
            ~\middle|~
                \exists w_1\in\Sigma^{|w_2|}:
                w_2 \in G\left(\left\{
                    w_1~|~w'\otimes w_1 \in L  
                \right\}
            \right)
            \right\}
        \]
  \end{itemize}
\end{definition}

\begin{theorem}
  \label{thm:algo}
  Assume $F$ is an independently length-preserving and uniformly regular
  function.\\
  Then the \Lstar{} learning procedure described in
  \cref{ssec:learnproc} eventually terminates and returns $\Lp$
  if, and only,
  it is regular.
\end{theorem}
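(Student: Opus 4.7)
The plan is to argue termination and correctness by invoking the standard guarantees of Angluin's \Lstar{} algorithm: if membership and equivalence queries are answered truthfully with respect to a regular target language, then \Lstar{} terminates and returns a minimal DFA for that target. Conversely, \Lstar{} can only ever produce a regular language, so the "only if" direction is immediate. It therefore suffices to verify that, under the assumptions on $F$, both oracles described in \cref{ssec:learnproc} are \emph{effective} (always return an answer in finite time) and \emph{truthful} (relative to $\Lp$ when it is regular).

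For membership queries on input $s\otimes t$ with $|s|=|t|=l$, I would use $(R1)$ and $(R2)$: since $F$ is length-preserving and acts independently on each length, the sequence $(\St_k\cap\Sigma^l)_{k\ge 0}$ is a decreasing sequence of subsets of the finite set $\Sigma^l$, hence it stabilizes after at most $|\Sigma|^l$ steps at some $\St_\infty^{(l)}$. Iteratively computing this sequence terminates in finitely many steps; the answer $s\preceq t$ is then decided by comparing the indices at which $s$ and $t$ leave the sequence (with the convention that $s$ never leaves iff $s\in \St_\infty^{(l)}$). By $(R2)$, this per-length computation faithfully reflects $\preceq_l$, so the answer agrees with $\Lp$.

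For equivalence queries on a candidate regular language $L$, the idea is to use \cref{prop:implemchr} as a sound and complete test: $L\cap(\Sigma^l\times\Sigma^l)=\preceq_l$ for every $l$ if and only if none of the five first-order conditions holds. Each condition can be translated into emptiness of a regular language over $\Sigma\times\Sigma$: items $(1)$--$(4)$ use only boolean operations, product, and inverse on $L$ and $\St\otimes\St$, while item $(5)$ additionally requires the binary relation $\{(s,t)\mid s\in F(\upw_R t)\}$, which is exactly the language guaranteed to be regular and computable by $(R3)$. Emptiness of a regular language is decidable, so each check terminates. If all five emptiness tests pass, we answer ``equivalent''; otherwise $L\neq \Lp$.

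The main obstacle, and the only nontrivial step, is producing an actual counterexample word for \Lstar{} when an equivalence check fails, since a witness to violating one of conditions $(1)$--$(5)$ of \cref{prop:implemchr} does not by itself indicate which pair should be added to or removed from $L$ (as already observed in \cref{ssec:learnproc} for transitivity). The remedy is to extract from the failing witness the length $l$ at which $L$ and $\preceq_l$ disagree, and then invoke the membership procedure above to enumerate $\St_k\cap\Sigma^l$ until stabilization; this yields $\preceq_l$ explicitly as a finite relation on $\Sigma^l$, from which a concrete pair $s\otimes t\in (L\triangle \Lp)\cap(\Sigma\times\Sigma)^l$ can be read off and returned to \Lstar{}. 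Combining these effective and truthful oracles with Angluin's termination guarantee completes the ``if'' direction, and as already noted the ``only if'' direction follows because \Lstar{} always outputs a regular language.
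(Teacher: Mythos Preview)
Your proposal is correct and follows essentially the same approach as the paper: you verify that the membership and equivalence oracles are effective and faithful (via the length-restricted iteration for membership, \cref{prop:implemchr} together with $(R3)$ for equivalence, and the length-$l$ enumeration for counterexample extraction), then invoke Angluin's termination guarantee for the ``if'' direction and the regularity of any \Lstar{} output for the ``only if'' direction. Your write-up is in fact more explicit than the paper's on how a failed first-order condition is converted into an actual \Lstar{} counterexample, which is the one nontrivial implementation point.
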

{
\begin{proof}
    Thanks to the length-preserving property of $F$, the relation
    $\preceq_k = \Sigma^k \times \Sigma^k \cap \preceq$ coincide with the
    disappearance relation initiated from
    $\St_0\cap \Sigma^k$ with the same operator~$F$.

  Uniform and effective regularity enables the effecive implementation of
  the Oracles:
  \begin{itemize}
    \item First of all, membership queries, as well as counterexample
        generation in the equivalence queries, require the computation
        of sequences $F^i(\St\cap \Sigma^l)$ for different values of $i,l\in \bbN$.
        This can be seen as a weaker form of effective regularity,
        satisfied by the operator.
    \item Equivalence queries implementation relies on the translation of the
        conditions provided by \cref{prop:implemchr} into regular queries
        on transducers over $\Sigma\times \Sigma$.
        Last condition in particular,
        requires, for a candidate relation $R$, encoded as a transducer
        $R\in\Reg{\Sigma\times \Sigma}$, the computation of
        \[\{s\otimes t~|~ t\in F( \uparrow_R s)\}=
            \left\{
                s\otimes t
            ~\middle|~
                \exists u\in\Sigma^{|w_2|}:
                t \in F\left(\left\{
                    u~|~s\otimes u \in L  
                \right\}
            \right)
            \right\}
         \]
  \end{itemize}

  We conclude with the termination guarantees:
  \begin{itemize}
    \item If $L_\preceq$ is regular, the \Lstar procedure terminates
    in polynomial time\cite{angluin}.
    \item Conversely, if the procedure terminates, the returned language $L$
    is regular and passed the equivalence query. Therefore,
    it satisfies the $L_\preceq$ characterization provided
    by \cref{prop:implemchr}, so $L_\preceq = L$ is regular.
  \end{itemize}
\end{proof}
}

\subsection{Application to \PPALstar}
We finally address the general case with the following observation:
\begin{proposition}
  \label{prop:starpreceq}
  Let $\varphi$ and $\psi$ be two closed formula and $\calM$ a parameterized Kripke
  structure, whose state space is $\St$.
  For any $X\subseteq \St$, we define $\calM_{|X}$ for the parameterized
  Kripke structure restricted to $X$ and consider
  the resulting disappearance relation $\preceq\subseteq \St^2$.

  We have:
  \[
    \sem{\annstar{\varphi}{\psi}{*}}(\calM) =
    \left\{
        s\in \St~\middle|~
        \exists t\in \St: s\in
           \sem{\psi}(\calM_{|\uparrow_{\preceq} t}) 
    \right\}
  \]
\end{proposition}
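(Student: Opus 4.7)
My plan is to expand $\sem{\annstar{\varphi}{\psi}{*}}(\calM)$ into an indexed disjunction over the iterated state spaces $(\St_k)_k$ induced by $F$, and then to rephrase this sequence through the upward closures of the disappearance pre-order~$\preceq$.

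First, I would prove by induction on $k$ the identity
\[
    \sem{\annstar{\varphi}{\psi}{k}}(\calM) \;=\; (\St \setminus \St_k) \;\cup\; \sem{\psi}(\calM_{|\St_k}).
\]
The case $k=0$ is immediate from \cref{def:ppalstar}. For the step, I would unfold $\annstar{\varphi}{\psi}{k+1}$ as $\annstar{\varphi}{(\ann{\varphi}{\psi})}{k}$, apply the induction hypothesis with $\ann{\varphi}{\psi}$ in place of $\psi$, and split $\sem{\ann{\varphi}{\psi}}(\calM_{|\St_k})$ via the vacuous-truth semantics as $(\St_k \setminus \St_{k+1}) \cup \sem{\psi}(\calM_{|\St_{k+1}})$, using that the operator associated with $\ann{\varphi}{}$ inside $\calM_{|\St_k}$ sends $\St_k$ to $\St_{k+1}$. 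The identity $(\St \setminus \St_k) \cup (\St_k \setminus \St_{k+1}) = \St \setminus \St_{k+1}$ then closes the step.

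Taking the union over $k \in \bbN$ and using $\bigcup_k (\St \setminus \St_k) = \St \setminus \St_\infty$ yields
\[
    \sem{\annstar{\varphi}{\psi}{*}}(\calM) \;=\; (\St \setminus \St_\infty) \;\cup\; \bigcup_{k \geq 0} \sem{\psi}(\calM_{|\St_k}).
\]
I would then identify $\{\St_k\}_k \cup \{\St_\infty\}$ with $\{\upw t : t \in \St\}$, using the characterisation preceding \cref{fig:hierachy}: whenever $\St_k \neq \St_{k+1}$, any $t \in \St_k \setminus \St_{k+1}$ witnesses $\upw t = \St_k$; any $t \in \St_\infty$ gives $\upw t = \St_\infty$; and conversely every upward closure takes one of these two forms. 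This directly matches $\bigcup_k \sem{\psi}(\calM_{|\St_k})$ with the non-vacuous part of the right-hand side.

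The main obstacle is to capture the disappeared states $\St \setminus \St_\infty$ via the existential over~$t$. Here I would appeal to the convention, coherent with the vacuous-truth semantics of~$\ann{\cdot}{}$, that a state $s$ lying outside the restricted state space $\upw t$ is treated as trivially belonging to $\sem{\psi}(\calM_{|\upw t})$. Under this reading, any $s$ removed at some finite iteration admits as a witness any $t$ with $s \prec t$, so that $s \notin \upw t$; conversely, every witness $t$ on the right contributes either to $\St \setminus \St_\infty$ or to some $\sem{\psi}(\calM_{|\St_k})$ on the left, yielding both inclusions.
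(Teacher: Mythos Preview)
The paper states \cref{prop:starpreceq} without proof (it is introduced as an ``observation'' and immediately used), so there is no argument to compare against directly. Your outline is the natural one: the inductive identity in Step~1 is correct, and the identification of $\{\St_k\}_k\cup\{\St_\infty\}$ with the family $\{\upw t : t\in\St\}$ is exactly what the paper establishes in the discussion preceding \cref{fig:hierachy}.

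The obstacle you flag in the last paragraph is real and not merely cosmetic. Under the paper's own definition of $\sem{\cdot}$ (stated just before \cref{sec:rmc}) one always has $\sem{\psi}(\calM_{|\upw t})\subseteq\upw t$, so a state outside $\upw t$ is \emph{excluded}, not vacuously included. With that literal reading the proposition already fails for $\psi\equiv\bot$: your Step~2 gives $\sem{\annstar{\varphi}{\bot}{*}}(\calM)=\St\setminus\St_\infty$, whereas the right-hand side is empty. The convention you invoke---treating every $s\notin\upw t$ as trivially belonging to $\sem{\psi}(\calM_{|\upw t})$---is precisely the repair needed, and it is coherent with how vacuous truth propagates through nested $\ann{\cdot}{}$; but it is a repair of the \emph{statement}, not a weakness of your proof. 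It would be cleaner to record this explicitly, for instance by rewriting the right-hand side as
\[
\bigl\{\, s\in\St \;\bigm|\; \exists t\in\St:\; s\notin\upw t \ \text{or}\ s\in\sem{\psi}(\calM_{|\upw t}) \,\bigr\},
\]
which is equally regular once $\preceq$ is, and makes the vacuous-truth branch visible.
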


We can easily see that the above set is regular if $\calM$ and $\preceq$ are
both regular. In order to proceed to their computation, we need to provide
the following uniformly regular property:
\begin{proposition}
  Let $\varphi$ be a closed formula on a regular Kripke structure~$\calM$. 
  The application, $F_{\varphi}$ defined by
  \[
    \forall X\subseteq \St,
    F_{\varphi}(X) = 
     \sem{\varphi}(\calM_{|X})
  \]
  is length-preserving, effectively and uniformly regular.
\end{proposition}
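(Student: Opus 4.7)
The proof plan is to reduce both claims to the constructive semantics computation already established in \cref{lem:recursive}.

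\textbf{Length-preservation.} Since $\calM$ is a regular Kripke structure, the transducer $\trans{\calM}$ only relates states of the same length, and each clause in the semantic definition of \PPAL{} (\cref{def:semantics})---indexing, modality $\may{i}$, and public announcement---acts within a single length slice. A routine structural induction on $\varphi$ then yields, for every $l\in\bbN$ and $X\subseteq \St$, the identity $\sem{\varphi}(\calM_{|X})\cap \Sigma^l = \sem{\varphi}(\calM_{|X\cap \Sigma^l})$, which is exactly the independent length-preservation condition of \cref{def:lenuregfun}.

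\textbf{Effective uniform regularity.} Fix an auxiliary alphabet $\Sigma'$ and a regular language $L\in\Reg{\Sigma'\times \Sigma}$. For each $w'\in(\Sigma')^*$, set $X_{w'}=\{w\in\Sigma^{|w'|}\mid w'\otimes w\in L\}$; the target is to show regularity and computability of
\[
\{\, w'\otimes w_2 \mid w_2\in\sem{\varphi}(\calM_{|X_{w'}})\,\}.
\]
The first step is to build a single parameterized transducer that simultaneously encodes all the restricted Kripke structures $\calM_{|X_{w'}}$:
\[
T_L \;=\; \{\, s\otimes b\otimes t\otimes w' \mid s\otimes b\otimes t\in \trans{\calM},\ w'\otimes s\in L,\ w'\otimes t\in L\,\},
\]
over the alphabet $\Sigma\times \bool\times \Sigma\times \Sigma'$. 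Its regularity follows from standard closure under intersection and synchronous product, and its slice at any $w'$ is precisely the transducer of $\calM_{|X_{w'}}$.

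Next, I would invoke the construction in the proof of \cref{lem:recursive}, treating the $\Sigma'$-component as the parameter alphabet in the same role that $\bool^\valX$ plays there. Each syntactic case of $\varphi$ is handled by boolean operations, synchronous products, and morphisms that leave the $\Sigma'$-component untouched, so regularity is preserved uniformly in the parameter. Projecting the final language onto the $\Sigma'\times \Sigma$-components delivers the required regular language and witnesses computability.

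The step to check carefully is the public announcement clause $\ann{\psi_1}{\psi_2}$, where the underlying Kripke structure is dynamically modified during evaluation. Following \cref{lem:recursive}, one constructs a parameterized announcement transducer $T_L\{\psi_1!\}$ whose slice at $w'$ coincides with $\trans{(\calM_{|X_{w'}})\{\psi_1!\}}$; because this construction only intersects $T_L$ with a morphism image of $\esem{\psi_1}(T_L)\otimes 0^*10^*\otimes \esem{\psi_1}(T_L)$ (all operations acting pointwise in $w'$), the invariant ``slicing at $w'$ commutes with the construction'' is preserved, and the recursion goes through.
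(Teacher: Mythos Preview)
Your proof is correct and follows essentially the same approach as the paper: embed the auxiliary $\Sigma'$-parameter into the transducer and invoke the effective semantics computation of \cref{lem:recursive}. The paper phrases this more tersely---it constructs a new regular Kripke structure $\calM'$ with the $\Sigma'$-component stored in its state alphabet and applies \cref{thm:rmc} as a black box---and it omits an explicit argument for independent length-preservation, which you supply.
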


\begin{proof}
  Given $L\in \Reg{\Sigma'\times \Sigma}$, we define a new
  regular Kripke structure $\calM'$ storing the information about $\Sigma'$
  in its state space. The construction of $\calM'$ is effective, and
  by \cref{thm:rmc}, we can compute
  $\sem{\varphi}(\calM')$.
\end{proof}
\section{Experiments}
\label{sec:exp}

We developed a prototype tool implementation, using
the Java libraries Learnlib and Automatalib~\cite{IHS15}. Three different
models were specified then
verified\footnote{Experiments were conducted on a i7-8550U CPU @ 1.80GHz machine
with 16GB of RAM and JavaSE-1.8.
The prototype and models are available online~\cite{prototype}.} showing
tractability of the procedure:
\begin{tabular}{ccc}
 Model & Duration & Memory Usage\\
 \hline
 Russian cards &  36s &    2365MB \\
 Large number &   53s &   1218MB \\
 $M\leq 3$ Muddy children & 3s & 130Mo \\
 $M\leq 4$ Muddy children & 3s & 162Mo \\
 $M\leq 9$ Muddy children & 24s & 1136Mo \\
 $M\leq 10$ Muddy children & TO~(5min+) &  \\
 $M\leq 11$ Muddy children & out of memory & \\
 $M<\infty$ Muddy children & 2.5s  &  111MB \\
\end{tabular}\\
The rest of the section discusses implementation details and description of the
aforementioned models.

\textbf{Usage.}
The tool takes as an input an automaton description of a regular Kripke
structure~$\calM$, and for each specification~$\varphi$,
computes its satisfaction set.
In case the complement~$\sem{\neg \varphi}(\calM)$ is non-empty,
a NFA is returned, which can be interpreted as the
set of counterexamples to~$\varphi$.
For usability reasons, the syntax of \PPALstar{} is
enhanced with several syntactic sugars, but can also embed dummy formulae,
equivalent to $\top$, whose
evaluation triggers visualization of the intermediate constructed automata.

\textbf{Automaton size.}
Since specifying a transducer for $\obs{\cdot}$ can be quite
tedious, we specify a rather general regular
Kripke structure encoding only the observation of the agents,
and further restricting the state space by applying
public announcement constructions.
As a matter of fact, the state space after only few announcements can already
require several hundred states.
The intermediate computations
may even lead to semantics automata of up to millions of states.
Note that the ordering of index quantifications inside the specification
plays a crucial role, as each quantified index is carried around in one coordinate
of the automaton alphabet, as explained by \cref{lem:recursive}.

\textbf{Learning procedure.}
Although several
DFA learning algorithms are provided by Learnlib, the classical
Angluin's~\Lstar turned out to be sufficient for our experiments:
for all our examples,
whenever termination was guaranteed\footnote{The learning procedure
diverges if, and only if, $L_\preceq$ is not regular.},
the algorithm converged within a minute.
The most expensive
task of the equivalence check is the last property of
\cref{prop:implemchr}:
it is indeed the only criterion involving the evaluation of the \PPAL{}
formula. Fortunately, many equivalence queries fail on previous criteria,
that are less expensive to check.

\subsection{Russian Card Problem}
\label{sub:russian}

This puzzle~\cite{vD03} involves $N$ different cards which are distributed
between three players Alice, Bob and Cathy. The goal of the game is for Alice
and Bob to exchange messages publicly, in order to get to know who has which
card in their hand, without disclosing any individual card information to Cathy.

In the one-round setting, Alice broadcasts a first
message, then Bob replies, which conclude the protocol.
As Bob can only announce a piece of information he already knows,
his message can trivially be assumed to announce Cathy's cards.
In other words, the one-round case focuses on Alice's announcement.

\textbf{Kripke structure.}
We let
$\AP=\{a,b,c\}$, and the only agent indexes
involved\footnote{We can assume that $\obs{i}$ is trivial for $i\geq 3$.}
are $a=1$, $b=2$, and $c=3$.
For $x\in \AP$ and $i\in \bbN$, $x_i$ holds iff agent $x$ has card $i$ in
their hand. Moreover, we assume that $a_i,b_i$ and $c_i$ are mutually exclusive
(each card appears only in one) hand. We easily check that $\calM$ is regular.

\textbf{Specification.}
An announcement of Alice is any statement about her own observation,
namely a characterization of the cards in her hand, or equivalently
$\left\{\{i_0, i_1, i_2\},
\{i_3, i_4, i_5\}, \hdots\right\}$,
seen as a set of
possible hands.
However, this representation is not fit to a parameterized context, where the
total number of cards is not fixed a priori.
Instead, we consider announcements specified in a
parameterized manner, namely in the propositional
fragment of \PPAL{}, involving only index quantifications,
the atomic proposition~$a$ and no epistemic operator.
A formula $\psi$ is a \emph{good announcement} if furthermore, it satisfies:
\[
\begin{aligned}
  \varphi_{{good}} = \quad
 & \psi \wedge \ann{ \psi }{(}
    & \text{// truthful PA}\\
 &\forall i,
    \knows{b}a_i \vee
    \knows{b}b_i \vee
    \knows{b}c_i
    & \text{// b knows the distribution}\\
 &\forall i, \neg c_i \rightarrow
    \left\{\begin{aligned}
    & \may{c}a_i \wedge 
    \may{c}\neg a_i \\ 
    & \may{c}b_i \wedge 
    \may{c}\neg b_i \quad )\hspace{-10em}
   \end{aligned}
   \right.
   & \text{// c doesn't know}
\end{aligned}
\]


While~\cite{vD05} provides several sufficient and necessary conditions on the number
of cards received by each participants, we focus here on a single example
of (sufficient) good announcement, provided by~\cite[Proposition~5]{vD05} in the
case where $N\%3 = 0$, Alice receives $3$ cards, Cathy only one, and Bob the rest
(property $\varphi_{{model}}$).

If Alice received the first three cards, the following announcement is claimed
to be good:
\[
    \psi \equiv \exists j: j\%3 = 0 \wedge
      \left(
        (a_{j} \wedge a_{j+1} \wedge a_{j+2})
      \vee
        (a_{j} \wedge a_{j+4} \wedge a_{j+8})
      \right)
\]
Which can be checked with the verification question:\[
    \calM \stackrel{?}{\vDash}
        \ann{\varphi_{{model}}}{
          \left(
            a_0 \wedge a_1 \wedge a_2 \rightarrow \varphi_{{good}}
          \right)
        }
\]

{
We leave to the reader the generalization to any initial hand of Alice and the
specification of $\varphi_{{model}}$.
}{
\[
  \varphi_{{model}} \equiv
    \left\{
    \begin{aligned}
      &  N\%3 = 0 & \\
      &  \exists i: c_i 
      \wedge \forall j:~ c_j \rightarrow i = j \\
      &  \exists i,j,k: i\neq j \wedge i\neq k \wedge j\neq k \wedge a_i \wedge a_j \wedge a_k \\
      &  \forall l, a_l \rightarrow
            i = l \vee j = l \vee k = l
    \end{aligned}\right.
\]

\begin{remark}
If Alice is not given the cards $0,1,2$, nor another combination specified
by~$\psi$, the announcement is not valid. However, it can be rewritten, depending
on Alice's current hand.
Let $x$ and $y$ be two index variables. For any $\varphi$ formula, we denote
$\varphi[x|y]$ for the formula where any propositional sub-formula $p_i$ has
been rewritten into
$i = x \wedge a_y \vee i = y \wedge a_x \vee i \notin \{x,y\} \wedge a_i$.

The previous verification question is converted into:
\[
    \calM \stackrel{?}{\vDash}
        \ann{\varphi_{{model}}}{
          \left(
            \exists x_1,y_1,x_2,y_2: \knows{a}
            \varphi_{{good}}(\psi[x_1|y_1][x_2|y_2])
          \right)
        }
\]
Note that the choice of a satisfying set of indices
$x_1, x_2,y_1,y_2$
must not be serendipity: it should work for all possible hands of $b$ and $c$,
that $a$ may imagine, hence the universal $\knows{a}$ quantification.
Note also that we need to swap only two pairs of cards, to reconstitute
a triple of cards appearing in~$\psi$.
\end{remark}
}

\subsection{Highest number}
\label{sub:highest}
The highest number problem involves two agents Alice and Bob both receiving
a different natural number between $0$ and $N$,
which they keep private. We model this situation
by $\AP=\{a,b\}$ and encode the observation of $a=0$ and $b=1$ as a transducer.
A letter $\alpha\in \AP$ is used to encode $\alpha$'s number,
in unary.
At each round, they are both asked simultaneously if one of them knows who has
the highest number. If not, a public announcement is made for this fact. If yes, the game
stops.
The termination of this protocol is checked by the
following iterated announcement, which we successfully verify:
$
    \annstar{
        \neg\left(
           \exists i\exists j: \knows{j} (a_i \wedge \neg b_i) \vee
                       \knows{j} (\neg a_i \wedge b_i)
        \right)
    }{\bot}{*}
    $
\subsection{Muddy Children: Bounded Case}
\label{sub:bounded}
In this section, assume the number of muddy
children is bounded by some fixed $M\in\bbN$,
although the total number of children is left as a parameter~$N$.
This assumption is implemented as public announecement made on the regular
Kripke structure of \cref{ex:muddyreg}.
{
   \[
      \ann{\exists i: m_i}{} 
      \ann{\exists i_1 \hdots i_M: \forall j: m_j \rightarrow \bigvee_{k=1}^M j=i_k}{}
    \]
}

Intuitively, the effect of this announcements is to
construct the product automaton of the original transducer~$\trans{\calM}$ with
a finite automaton of size $M+1$ counting how many muddy children have been
seen so far. This product has to be made twice: once on the source and once
on the target of the transducer. Nonetheless,
the target and source word differ only by one letter, hence the resulting
automaton is of size $O(M)$.

Then, we proceed to the iterated announcement
$\annstar{\forall i, \may{i}\neg m_i}{}{*}$, which reduces to the disappearance
relation computation: for $s,t\in \St_0$, $s\preceq t$ if, and only if,
$|s|_m \leq |t|_m$. As a matter of fact, the protocol terminates
after $|s|_m$ announcements of the father
whenever there are exactly $|s|_m$ muddy children.

This relation can be effectively encoded as a length-preserving
transducer, counting the difference of number of muddy children between $s$ and
$t$, which
lies between $-M$ and $M$. Our algorithm successfully computes
a transducer for $\preceq$, with $O(M)$ states.

\subsection{Unbounded Case and Symmetry Reduction}
\label{sub:muddy}
We remove now the boundedness condition. As before, $\preceq$ has to
compare the number of muddy children between two given states, which can now
be arbitrarily large: take for example $m^nc^n \preceq c^nm^n$.
As a consequence, $L_\preceq$ is not regular anymore
and the learning procedure doesn't terminate.

Nonetheless, we observe that the problem is invariant under permutation,
more precisely:
  (1) The formula $\varphi$ lies in a fragment of \PPALstar{} without
    index comparison of the form $i=j+k$ for any $k\neq 0$;
  (2) For
    any word $w\in\trans{\calM}$ and any bijection $\Sigma$ on $\range{|w|}$,\\
    $w[\sigma(0)] \hdots w[\sigma(|w|-1)]\in \trans{\calM}$.

Therefore, we proceed to a counting abstraction of the model, restricting
the regular Kripke structure. Informally,
we want to preserve the property that
a transition $s \obs{i} t$ is valid if, and only if,
there exists some agent~$j$ with the same "local state" as~$i$, that can perform
this transition. Here, the announcement translates to \enquote{there is still a
muddy child who doesn't know}.

As the state space is reduced to $c^*m^*$, our rewriting actually consists
in a unary encoding of the number clean and muddy children.
As for the largest number
challenge, the disappearance relation is regular, and we successfully
verify the rewritten formula:
\[
\varphi\equiv
    \ann{\forall i, \neg m_{i+1} \rightarrow \neg m_i}{}
    \ann{\exists i: m_i}{}
    \annstar{\exists i: m_i \wedge \may{i}\neg m_i}{\bot}{*}
\]

\section{Related and future work}
\label{sec:conclusion}

\noindent
\textbf{Related Work.}
Finite-state model checkers for various epistemic logics are available, e.g., 
MCMAS~\cite{LQR17}, DEMO~\cite{DEMO,DR07}, SMCDEL~\cite{benthem}, and MCK~
\cite{MCK}. Kouvaros and Lomuscio~\cite{KL16} have studied \emph{cutoff 
techniques}
for $\text{ACTL*K} \setminus \text{X}$, a temporal-epistemic logic combining
S5 and temporal logic $\text{ACTL*} \setminus \text{X}$, which is used in
MCMAS. Roughly speaking, a cutoff exists for a parameterized system when the 
behavior of any instance of the system
can be simulated (using an appropriate notion of simulation) by the behavior of 
systems of a fixed computable parameter-size $k$, which would allow us to 
reduce the
parameterized model checking problem into finite-state model checking (up to
parameter of size $k$). This cutoff technique --- as is the case with most 
cutoff methods (see~\cite{vojnar-habilitation,sasha-book}) --- needs to be 
specially tuned to different subclasses of parameterized systems. We are not
aware of the existence of such cutoff values for the systems that we consider
in this paper.
Our regular model checking method is complementary to such cutoff methods.
The method is fully automatic, but it might not terminate in general (albeit we
provide also termination guarantees). To the best of our knowledge, our
method provides the first automatic solution to the 
parameterized verification problem for the muddy children puzzle, the Russian
card puzzle \cite{vD03}, and the large number challenge, all of which have been
studied in the finite-state case (e.g. see~\cite{benthem,LQR17,DR07,DEMO}).

\noindent
\textbf{Future Work.}
Natural extensions of \PPALstar{} include the support
of dynamic properties, enabling the specification and verification
of richer communication protocols, where the communication pattern is
non-deterministic~\cite{LQR17}.
The study of the disappearance relation revealed that the chosen encoding is
crucial for termination. The counting abstraction sketched for the Muddy children
case would benefit from a systemic approach. Once the symmetries have been
detected in the automatic structure, which can be implemented~\cite{vmcai16}
with transducer techniques, a lossless Parikh image~\cite{P66} could be
computed in terms of a Presburger formula~\cite{SSMH04}. As the \PPAL{}
semantics involves only boolean, \crossp{} and morphism operations, the computation
could be performed in this domain. As another future direction, we would also
like to study cutoff methods of~\cite{KL16}
for the examples that we consider in this paper.

\begin{acks}
This work was 
  supported by the ERC Starting Grant 759969 (AV-SMP) and Max-Planck Fellowship.

\end{acks}



\bibliographystyle{ACM-Reference-Format} 
\bibliography{main}


\end{document}